\documentclass{article}
\usepackage[a4paper, total={6in, 8in}]{geometry}
\usepackage[utf8]{inputenc}
\usepackage{graphicx} 
\usepackage{tikz}
\usepackage{dsfont}
\usetikzlibrary{calc}
\setlength{\arrayrulewidth}{0.2mm}
\setlength{\tabcolsep}{10pt}

\usepackage{graphicx}
\usepackage{lipsum}
\usepackage{float}
\usepackage[rightcaption]{sidecap}
\graphicspath{ {./images/} }
\usepackage[utf8]{inputenc}
\usepackage{amsmath,amssymb,amsthm}
\usepackage{lmodern,stmaryrd}
\usepackage{wasysym}
\usepackage{sgame}
\usepackage[T1]{fontenc}
\usepackage[export]{adjustbox}
\newtheorem{theorem}{Theorem}
\newtheorem{definition}{Definition}
\newtheorem{lemma}{Lemma}

\newtheorem{assumption}{Assumption}

\counterwithin*{equation}{section}
\counterwithin*{equation}{subsection}
\usepackage[pdftex]{hyperref}
\usepackage{chicago}
\usepackage{enumerate,tikz,arydshln,ragged2e,wasysym}
\DeclareMathOperator*{\E}{\mathbb{E}}
\usepackage{apacite}

\usepackage{bbm}

\begin{document}

\thispagestyle{plain}
\begin{center}
    \Large
    \textbf{Rationalizability and Monotonocity in Games with Incomplete Information \footnote{I would like to acknowledge Andrés Perea for his valuable contributions as my supervisor. His guidance, availability for discussions, and willingness to provide constructive criticism have been greatly appreciated.} }
        
   \vspace{0.4cm}
   \large

   \vspace{0.4cm}
    \text{Joep van Sloun} \footnote{Department of Quantitative Economics, School of Business and Economics, Maastricht University, 6200 MD
Maastricht, THE NETHERLANDS, Email: j.vansloun@maastrichtuniversity.nl
}
       \large

    \vspace{0.4cm}
    \text{\today}
    \vspace{0.9cm}

    \end{center}

\begin{abstract}
 This paper examines games with strategic complements or substitutes and incomplete information, where players are uncertain about the opponents' parameters. We assume that the players' beliefs about the opponent's parameters are selected from some given set of beliefs. One extreme is the case where these sets only contain a single belief, representing a scenario where the players' actual beliefs about the parameters are commonly known among the players. Another extreme is the situation where these sets contain all possible beliefs, representing a scenario where the players have no information about the opponents' beliefs about parameters. But we also allow for intermediate cases, where these sets contain some, but not all, possible beliefs about the parameters. We introduce an assumption of weakly increasing differences that takes both the choice belief and parameter belief of a player into account. Under this assumption, we demonstrate that greater choice-parameter beliefs leads to greater optimal choices. Moreover, we show that the greatest and least point rationalizable choice of a player is increasing in their parameter, and these can be determined through an iterative procedure. In each round of the iterative procedure, the lowest surviving choice is optimal for the lowest choice-parameter belief, while the greatest surviving choice is optimal for the highest choice-parameter belief.

\end{abstract}
\bigskip

  \noindent\textbf{Keywords:}  Beliefs, Rationalizability, Comparative Statics, Monotonicity, Static Games, Complementaries, Substitutes  \\
            \vspace{0in}\\
            \noindent\textbf{JEL Codes:} C72, L11 \\

\newpage
\section{Introduction}
We focus on games with strategic complements or substitutes, a concept introduced by \citeA{bulow1985multimarket}. In games with strategic complements, a player's utility increases when an opponent increases their choice, while in games with strategic substitutes, it is the opposite. Many real-world games exhibit strategic complementarities and substitutes, such as oligopolies, games with network externalities, and bank runs. Extensive research has been done on these games under complete information, see for example \citeA{milgrom1990rationalizability}, \citeA{vives1990nash}, and \citeA{topkis1979equilibrium} showing the existence of largest and smallest equilibria in pure choices that bound the set of rationalizable choices.
\newline
\newline
Similar results have been obtained in the literature on games with incomplete information. \citeA{vives1990nash} showed that his results for the complete information case could be extended to the case for incomplete information. \citeA{athey2001single} studied games with a single crossing condition, where a player's optimal choice increases when each opponent uses an increasing strategy, i.e., choosing a greater choice when they have a higher parameter value compared to a lower parameter value. This result has been further extended and generalized by \citeA{mcadams2003isotone} and \citeA{reny2011existence}. \citeA{van2007monotone} considered games where a player's belief about their opponents' parameters changes with their own parameter, specifically, if a player's parameter increases, they believe that greater parameter values of their opponents are more likely. This condition is weaker than the affiliation condition used in \citeA{athey2002monotone}. 
\newline
\newline
The existing literature often assumes that the distribution of the players' parameter values is commonly known by the players, which represents an extreme case of full information about the distribution of parameter values. On the other hand, another extreme case is when players believe they have no knowledge about the probability distribution of opponents' parameter values. In that case, a player could form any belief about the distribution of the parameter values of his opponents. In this paper, we consider a more nuanced approach where players form beliefs from a given set of parameter beliefs, which can range from these extreme cases to intermediate cases.  
\newline
\newline
Furthermore, similar to the related literature, we assume that players also form choice beliefs, which specify the choices they believe their opponents will make for each possible parameter value. By combining choice beliefs with parameter beliefs, players can compute the probability of each choice combination of their opponents being played. We then introduce a notion of stochastic dominance, where a choice-parameter belief pair stochastically dominates another pair if the probability of greater choices is higher under the former.
\newline
\newline
The solution concept we use is point rationalizability, introduced by \citeA{bernheim1984rationalizable}, which has been generalized for games with incomplete information. For example, the solution concept of interim rationalizability has been proposed by \citeA{ely2004hierarchies} and \citeA{dekel2007interim}. While interim rationalizability fixes the belief hierarchies on utilities, we do not, making our solution concept more akin to $\Delta$-rationalizability proposed by \citeA{battigalli2003rationalizability} and \citeA{battigalli2003rationalization}. Unlike $\Delta$-rationalizability, we impose no constraints on the first-order choice beliefs of players. Our approach is also similar to that of \citeA{bach2021incomplete}, but with the distinction that the sets of choices and parameters in our setting are not finite. Moreover, we restrict to point beliefs, assigning probability 1 to a single choice for every parameter value of the opponent.
\newline
\newline
By focusing on point beliefs, we simplify the comparison with pure Nash equilibria and streamline the analysis. Rationalizability relaxes the assumption of correct beliefs, which requires each player to believe that his opponents are correct about his own beliefs and that his opponents share his own beliefs about other players. When each player can have only one parameter belief, he automatically believes that his opponents are correct about his parameter belief. However, when players can have multiple parameter beliefs, it becomes possible for a player to believe that his opponents are incorrect about his parameter belief, which relaxes the correct beliefs assumption for both choice and parameter beliefs.
\newline
\newline
In this paper, we provide a characterization of the set of point rationalizable choices for games with strategic complements or substitutes. We make the assumption that the expected utility function of each player exhibits weakly increasing differences. This means that an increase in the parameter value or the probability of high choices will result in an increase in the expected utility difference between two choices of a player. Specifically, we show that in our iterative procedure to characterize the point rationalizable choices, a player's highest and lowest combination of choice and parameter belief determines the highest and lowest choices that the player can rationally make. Furthermore, we demonstrate that as a player's parameter increases, both their lowest and greatest point rationalizable choice also increase. Moreover, we aim to shed light on how players' decision-making processes changes when they are able to compare different choice and parameter belief pairs. We demonstrate that in games with strategic complements (substitutes), where the utility function of a player has a third-order partial derivative of zero with respect to his choice and any two choices of his opponents, an increase in the choice-parameter belief pair with respect to stochastic dominance leads to an increase (decrease) in the optimal choice of the player.
\newline
\newline
The outline of the paper is as follows: Section 2 provides the necessary definitions and introduces point rationalizability under incomplete information. Section 3 shows the main assumptions and results. In Section 4, we provide two examples to illustrate our findings. The first example is a Bertrand model with differentiated goods, where players face complete uncertainty about costs of their opponent. The second example is a Cournot model where players face uncertainty about the price elasticity of demand. Section 5 provides some concluding remarks. All proofs and lemmas used in the proofs are collected in the appendix.

\section{Point Rationalizability in Games with Incomplete Information}
\subsection{Game with Incomplete Information}
We start by defining a game with incomplete information.
\begin{definition}
    A game with incomplete information $(C_i,u_i,\Theta_i)_{i \in N}$ is specified by the following: \newline
    1. A set of players $N=\{1,...,n\}$. \newline
    2. A set of choices $C_i$ for each player $i \in N$. \newline
    3. A set $\Theta_i$ of possible parameter values for each player $i \in N$. \newline
    4. A utility function $u_i: \Theta_i \times C_i \times C_{-i} \xrightarrow{}\mathbb{R}$ for each player $i \in N$, where $ C_{-i}= \underset{j\neq i}{\times} C_j $
\end{definition}
 We assume that $\Theta_i=[ \underline{\theta}_i,\Bar{\theta}_i]$ for every player $i$ and that $c_i \in C_i=[\underline{c}_i,\Bar{c}_i] \ \forall i \in N$. Finally, let $\Theta_{-i}=\underset{j\neq i}{\times} \Theta_{j}$. 
 \newline
 \newline
 Each player $i$ knows exactly what his own parameter value is, but is uncertain about the parameter values of his opponents, which makes him uncertain about his opponents' utility functions as well. Hence, a player forms a probabilistic parameter belief about the parameter values of the opponents. This is denoted by $f_i= (f_{ij})_{j \neq i }$, where $f_{ij}$ is a probability density function on $\Theta_j$. The set of all possible parameter beliefs of a player $i$ is denoted by $M_i$. A choice belief $\beta_i$ specifies a choice for each opponent $j$, for each possible parameter combination $\theta_j \in \Theta_j$. Hence, a choice belief is a point belief. We denote this as $\beta_i=(\beta_{ij}(\theta_j)_{\theta_j \in \Theta_j})_{j \neq i}$, where $\beta_{ij}(\theta_j) \in C_j$ for every $j \neq i$ and $\theta_j \in \Theta_j$.
\newline
\newline
A player $i$ can combine these two beliefs to compute the probability of a choice combination of the opponents being played. 

\begin{definition}
Consider player $i$ with a parameter belief $f_i$ and a choice belief $\beta_i$. The composite belief $(\beta_i \circ f_i)$ which combines these two beliefs is given by
$$(\beta_i \circ f_i)(c_{-i})=  \prod_{j\neq i} (\beta_{ij} \circ f_{ij})(c_j) $$
where
$$(\beta_{ij} \circ f_{ij})(c_j)= \int_{\theta_j:\beta_{ij}(\theta_j)=c_{j}}  f_{ij}(\theta_j) \, d \theta_j.$$
\end{definition}

A composite belief is useful because even though a player forms a choice belief and a parameter belief, he is ultimately interested in the probability of each choice combination of his opponents. A player can use this composite belief to calculate his expected utility.

\begin{definition}
 Consider player $i$ with a parameter belief $f_i$ and a choice belief $\beta_i$. We can denote the expected utility of player $i$ as 

 $$U_i(\theta_i,c_i,\beta_i,f_i)= \int_{c_{-i} \in C_{-i}} (\beta_i \circ f_i)(c_{-i}) \cdot  u_i(\theta_i, c_i,c_{-i}) \, d c_{-i} $$

  Equivalently, the expected utility function of player $i$ can also be directly computed given his choice belief and parameter belief.
 
  $$U_i(\theta_i,c_i,\beta_i,f_i)=  \int_{\theta_{-i} \in \underset{j\neq i}{\times} \Theta_j }  \prod_{j\neq i} f_{ij}(\theta_j) \cdot u_i(\theta_i, c_i,(\beta_{ij}(\theta_j))_{j \neq i}) \, d \theta_{-i}.$$

  \end{definition}
We will now define that a choice is optimal for a player if this choice maximizes his expected utility.
\begin{definition}
Consider player $i \in N$. The choice $c_i$ is optimal for $(\theta_i,\beta_i,f_i)$ if 
$$U_i(\theta_i,c_i,\beta_i,f_i) \geq U_i(\theta_i,c'_i,\beta_i,f_i) \ \forall \ c'_i \in C_i $$

If $c_i \in A_i \subseteq C_i$ and the above inequality holds for every $c'_i\in A_i$, then $c_i$ is optimal in $A_i$ for $(\theta_i,\beta_i,f_i)$.
\end{definition}

\subsection{Point Rationalizability}
We can now start to define point rationalizability \cite{bernheim1984rationalizable}, adapted for games with incomplete information. In each round of the iterative procedure, only the choices that are optimal for a (surviving) choice belief and a parameter belief survive the round. 

\begin{definition}
Let $P_i^0(\theta_i)= C_i \ \forall \theta_i \in \Theta_i \ \text{and} \ \forall i \in N$
Let $k \geq 1$ and suppose that $P_i^{k-1}(\theta_i)$ has been defined for all players $i \in N$ and all $\theta_i\in \Theta_i$. Then for all players $i \in N$ and all $\theta_i \in \Theta_i$,
$$P_i^k(\theta_i)= \{ c_i \in P_i^{k-1}(\theta_i) | c_i \ \text{is optimal in } P_i^{k-1}(\theta_i) \ \text{for the parameter} \ \theta_i \ \text{and a choice belief} \ \beta_i $$ $$ \ \text{and parameter belief} \ f_i \in M_i \ \text{where} \ \beta_{ij}(\theta_{j}) \in P_{j}^{k-1}(\theta_{j}) \ \forall \ \theta_j \in \theta_j \ \text{and} \ j \neq i \}.$$
\end{definition}
 The set of point rationalizable choices of an player $i$ with parameter $\theta_i$ can then be denoted as $P_i(\theta_i)= \underset{k \geq 1}{\bigcap} P_i^k(\theta_i)$.

\subsection{Comparing Parameters and Beliefs}

The next definition provides a way to compare choice combinations of opponents of an agent.

\begin{definition}
We define $c_{-i}' \geq c_{-i}$ if 

$$c_j' \geq c_j \ \forall j \neq i.$$

\end{definition}

For the next definition, a choice belief is considered larger than another choice belief if, for the former, a player believes that the opponents will choose a greater choice for each possible parameter.
\begin{definition}
We define $\beta_i' \geq \beta_i$ if 

$$\beta_{ij}'(\theta_j) \geq \beta_{ij}(\theta_j) \ \forall \theta_j \in \Theta_j \ \text{and} \ \forall \ j \neq i.$$

\end{definition}
The next definition provides a notion of comparing parameter beliefs. A parameter belief $f_i'$ first-order stochastically dominates another parameter belief $f_i$ if the probability of high parameter values is greater under $f_i'$ \cite{hadar1969rules}.  
\begin{definition}
We define $ f_{ij}' \geq f_{ij}$ if 

 $$\int_{\theta_j'}^{\Bar{\theta}_j}f_{ij}'(\theta_j)  \, d \theta_{j} \geq \int_{\theta_j'}^{\Bar{\theta}_j}f_{ij}(\theta_j)  \, d \theta_{j}  \ \forall \theta_{j}' \in \Theta_{j}   $$

 and $f_i'\geq f_i$ if  $ f_{ij}' \geq f_{ij} \ \forall j \neq i$.

\end{definition}
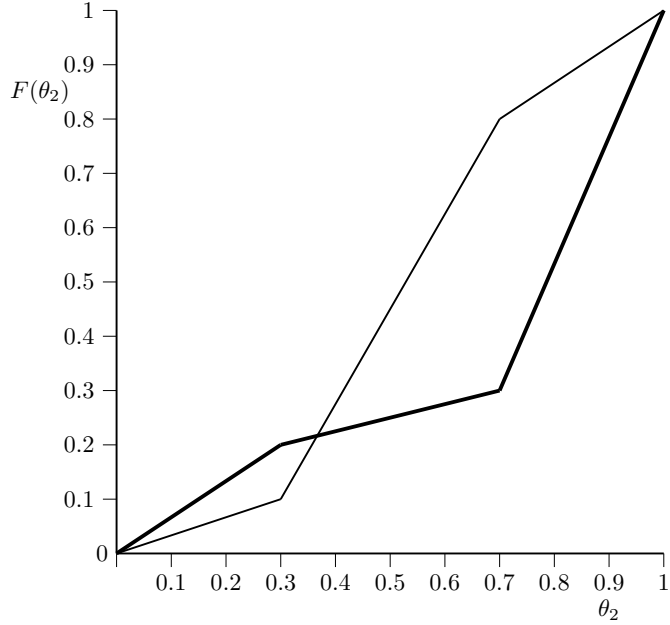
\begin{figure}[t]\label{figure: henk}
    \centering
     \scalebox{0.9} {
    \begin{tikzpicture}
[scale=0.8]
   \draw[thick] (0,0) -- (10,0);
\draw[thick] (0,0) -- (0,10);
\draw (0,0) node [anchor=east] {$0$};
\draw (-0.7,8.5) node [anchor=east] {$F(\theta_2)$};

\draw (9,-0.7) node [anchor=north] {$\theta_2$};

\draw[ultra thick] (0,0) -- (3,2);
\draw[ultra thick] (3,2) -- (7,3);
\draw[ultra thick] (7,3) -- (10,10);
\draw[thick] (0,0) -- (3,1);
\draw[thick] (3,1) -- (7,8);
\draw[thick] (7,8) -- (10,10);

  \draw[black] (0,0) -- (0, -0.25);
            \draw[black] (1,0) -- (1, -0.25);
            \draw (1,-0.25) node [anchor=north] {$0.1$};
             \draw[black] (2,0) -- (2, -0.25);
             \draw (2,-0.25) node [anchor=north] {$0.2$};
            \draw[black] (3,0) -- (3, -0.25);
            \draw (3,-0.25) node [anchor=north] {$0.3$};
             \draw[black] (4,0) -- (4, -0.25);
             \draw (4,-0.25) node [anchor=north] {$0.4$};
\draw[black] (5,0) -- (5, -0.25);
\draw (5,-0.25) node [anchor=north] {$0.5$};
             \draw[black] (6,0) -- (6, -0.25);
             \draw (6,-0.25) node [anchor=north] {$0.6$};
            \draw[black] (7,0) -- (7, -0.25);
            \draw (7,-0.25) node [anchor=north] {$0.7$};
             \draw[black] (8,0) -- (8, -0.25);
             \draw (8,-0.25) node [anchor=north] {$0.8$};
             \draw[black] (9,0) -- (9, -0.25);
             \draw (9,-0.25) node [anchor=north] {$0.9$};
             \draw[black] (10,0) -- (10, -0.25);
             \draw (10,-0.25) node [anchor=north] {$1$};

            \draw[black] (0,1) -- (-0.25,1);
            \draw (-0.25,1) node [anchor=east] {$0.1$};
             \draw[black] (0,2) -- (-0.25,2);
             \draw (-0.25,2) node [anchor=east] {$0.2$};
            \draw[black] (0,3) -- (-0.25,3);
            \draw (-0.25,3) node [anchor=east] {$0.3$};
             \draw[black] (0,4) -- (-0.25,4);
             \draw (-0.25,4) node [anchor=east] {$0.4$};
\draw[black] (0,5) -- (-0.25,5);
\draw (-0.25,5) node [anchor=east] {$0.5$};
             \draw[black] (0,6) -- (-0.25,6);
             \draw (-0.25,6) node [anchor=east] {$0.6$};
            \draw[black] (0,7) -- (-0.25,7);
            \draw (-0.25,7) node[anchor=east] {$0.7$};
             \draw[black] (0,8) -- (-0.25,8);
             \draw (-0.25,8) node [anchor=east] {$0.8$};
             \draw[black] (0,9) -- (-0.25,9);
             \draw (-0.25,9) node [anchor=east] {$0.9$};
             \draw[black] (0,10) -- (-0.25,10);
             \draw (-0.25,10) node [anchor=east] {$1$};

\end{tikzpicture}
}
   \caption{Parameter beliefs of player 1}
    \label{fig:my_label}
\end{figure}

Similarly, the next definition combines the choice belief and parameter belief. We say that  $(\beta_i' \circ f_i')$ stochastically dominates $ (\beta_i \circ f_i)$ if the probability of high choices is greater under the former belief.
\begin{definition}
We define $(\beta_{ij}' \circ f_{ij}') \geq (\beta_{ij} \circ f_{ij})$ if

$$\int_{c_j'}^{\Bar{c}_{j}} (\beta_{ij}' \circ f_{ij}')(c_{j}) \, d c_{j} \geq \int_{c_{j}'}^{\Bar{c}_{j}} (\beta_{ij} \circ f_{ij})(c_{j})  \, d c_{j}  \ \forall c_{j}'  \in C_{j} $$
and $(\beta_{i}' \circ f_{i}') \geq (\beta_{i} \circ f_{i})$ if $(\beta_{ij}' \circ f_{ij}') \geq (\beta_{ij} \circ f_{ij}) \ \forall j\neq i$.
\end{definition}
Definition 8 and 9 correspond to first-order stochastic dominance. The next definition combines our previous definitions of a composite belief and parameter value. 

\begin{definition}
We define $(\theta_i',\beta_i', f_i') \geq (\theta_i,\beta_i, f_i)$ if $\theta_i' \geq \theta_i$ and $(\beta_i' \circ f_i') \geq (\beta_i \circ f_i)$.  
\end{definition}

Consider the following simple example with 2 players, $\Theta_1= \Theta_2= [0,1]$. Consider the following parameter beliefs of player 1 

\begin{figure}[t]\label{figure: henk}
    \centering
     \scalebox{0.9} {
    \begin{tikzpicture}
[scale=0.8]
   \draw[thick] (0,0) -- (10,0);
\draw[thick] (0,0) -- (0,10);
\draw (0,0) node [anchor=east] {$0$};
\draw (-0.7,8.5) node [anchor=east] {$\beta_{12}(\theta_2)$};

\draw (9,-0.7) node [anchor=north] {$\theta_2$};

\draw[ultra thick] (0,5) -- (3,5);
\draw[ultra thick] (3,3) -- (7,3);
\draw[ultra thick] (7,8) -- (10,8);
\draw[thick] (0,8) -- (3,8);
\draw[thick] (3,2) -- (7,2);
\draw[thick] (7,5) -- (10,5);

  \draw[black] (0,0) -- (0, -0.25);
            \draw[black] (1,0) -- (1, -0.25);
            \draw (1,-0.25) node [anchor=north] {$0.1$};
             \draw[black] (2,0) -- (2, -0.25);
             \draw (2,-0.25) node [anchor=north] {$0.2$};
            \draw[black] (3,0) -- (3, -0.25);
            \draw (3,-0.25) node [anchor=north] {$0.3$};
             \draw[black] (4,0) -- (4, -0.25);
             \draw (4,-0.25) node [anchor=north] {$0.4$};
\draw[black] (5,0) -- (5, -0.25);
\draw (5,-0.25) node [anchor=north] {$0.5$};
             \draw[black] (6,0) -- (6, -0.25);
             \draw (6,-0.25) node [anchor=north] {$0.6$};
            \draw[black] (7,0) -- (7, -0.25);
            \draw (7,-0.25) node [anchor=north] {$0.7$};
             \draw[black] (8,0) -- (8, -0.25);
             \draw (8,-0.25) node [anchor=north] {$0.8$};
             \draw[black] (9,0) -- (9, -0.25);
             \draw (9,-0.25) node [anchor=north] {$0.9$};
             \draw[black] (10,0) -- (10, -0.25);
             \draw (10,-0.25) node [anchor=north] {$1$};

            \draw[black] (0,1) -- (-0.25,1);
            \draw (-0.25,1) node [anchor=east] {$0.1$};
             \draw[black] (0,2) -- (-0.25,2);
             \draw (-0.25,2) node [anchor=east] {$0.2$};
            \draw[black] (0,3) -- (-0.25,3);
            \draw (-0.25,3) node [anchor=east] {$0.3$};
             \draw[black] (0,4) -- (-0.25,4);
             \draw (-0.25,4) node [anchor=east] {$0.4$};
\draw[black] (0,5) -- (-0.25,5);
\draw (-0.25,5) node [anchor=east] {$0.5$};
             \draw[black] (0,6) -- (-0.25,6);
             \draw (-0.25,6) node [anchor=east] {$0.6$};
            \draw[black] (0,7) -- (-0.25,7);
            \draw (-0.25,7) node[anchor=east] {$0.7$};
             \draw[black] (0,8) -- (-0.25,8);
             \draw (-0.25,8) node [anchor=east] {$0.8$};
             \draw[black] (0,9) -- (-0.25,9);
             \draw (-0.25,9) node [anchor=east] {$0.9$};
             \draw[black] (0,10) -- (-0.25,10);
             \draw (-0.25,10) node [anchor=east] {$1$};

\end{tikzpicture}
}
   \caption{Choice beliefs of player 1}
    \label{fig:my_label}
\end{figure}
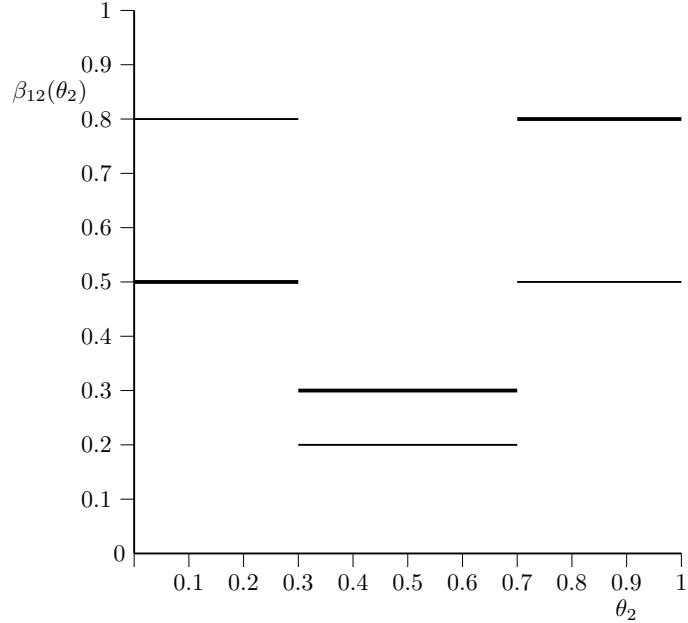

\[   
f'_{12}(\theta_2)=
     \begin{cases}
      \frac{2}{3}  & \text{if} \ 0 \leq \theta_2 \ \leq 0.3 \\
     \frac{1}{4} & \text{if}\ 0.3 \leq \theta_2 \ \leq 0.7 \\
       \frac{7}{3} & \text{if} \  0.7 \leq \theta_2 \ \leq 1\\ 
 
     \end{cases}
\]

\[   
f_{12}(\theta_2)=
     \begin{cases}
      \frac{1}{3}  & \text{if} \ 0 \leq \theta_2 \ \leq 0.3 \\
     \frac{7}{4} & \text{if}\ 0.3 \leq \theta_2 \ \leq 0.7 \\
       \frac{2}{3} & \text{if} \  0.7 \leq \theta_2 \ \leq 1\\ 
 
     \end{cases}
\]

Figure 1 plots the corresponding cumulative distribution function for both parameter beliefs, which shows that neither parameter belief stochastically dominates one another. The bold line corresponds to $f_{12}'$, and the other line to $f_{12}$. Next consider the following choice beliefs depicted in Figure 2, where again the bold lines correspond to $\beta_{12}'(\theta_2)$ and the other lines to $\beta_{12}(\theta_2)$.

\[   
\beta_{12}'(\theta_2)=
     \begin{cases}
      0.5  & \text{if} \ 0 \leq \theta_2 \ \leq 0.3 \\
     0.3 & \text{if}\ 0.3 \leq \theta_2 \ \leq 0.7 \\
       0.8 & \text{if} \  0.7 \leq \theta_2 \ \leq 1\\ 
 
     \end{cases}
\]

\[   
\beta_{12}(\theta_2)=
     \begin{cases}
      0.8  & \text{if} \ 0 \leq \theta_2 \ \leq 0.3 \\
     0.2 & \text{if}\ 0.3 \leq \theta_2 \ \leq 0.7 \\
       0.5 & \text{if} \  0.7 \leq \theta_2 \ \leq 1\\ 
 
     \end{cases}
\]

Neither choice belief is larger than the other for any $\theta_2$, nor are these choice beliefs always increasing or decreasing in the value of the parameter. However, we will now show that $(\beta_{12}' \circ f_{12}') \geq (\beta_{12} \circ f_{12})$:

\[   
\int_{c_2'}^{1} (\beta_{12}' \circ f_{12}')(c_{2}) \, d c_{2}=
     \begin{cases}
      0  & \text{if} \ c_2' > 0.8 \\
     0.7 & \text{if}\ 0.5 < c_2' \leq 0.8 \\
       0.9 & \text{if} \  0.3 <  c_2' \leq 0.5\\ 
       1 & \text{if} \  c_2' \leq 0.3\\ 
     \end{cases}
\]

\[   
\int_{c_2'}^{1} (\beta_{12} \circ f_{12})(c_{2}) \, d c_{2}=
     \begin{cases}
      0  & \text{if} \ c_2' > 0.8 \\
     0.1 & \text{if}\ 0.5 < c_2' \leq 0.8 \\
       0.3 & \text{if} \  0.2 <  c_2' \leq 0.5\\ 
       1 & \text{if} \  c_2' \leq 0.2\\ 
     \end{cases}
\]
Hence, the probability of high choices is greater for the former choice-parameter belief combination. Note that if $f_i' \geq f_i$ and $\beta_i' \geq \beta_i$, it is not necessarily true that $(\beta_i' \circ f_i') \geq (\beta_i \circ f_i)$. For example, if  both $\beta_i'$ and $ \beta_i$ are decreasing, and the probability of low parameter values is much greater for $f_i$, then it is not necessarily true that  $(\beta_i' \circ f_i') \geq (\beta_i \circ f_i)$. Lastly, if we consider the same parameter belief $f_i$ and $\beta_i' \geq \beta_i$. Then $(\beta_i' \circ f_i) \geq (\beta_i \circ f_i)$, but the opposite is not necessarily true.

\section{Monotonicity Results under Point Rationalizability}
We are interested in comparing two distinct parameter beliefs and choice beliefs, determining which one yields a greater optimal choice for the player. To this end, we assume that the expected utility function exhibits increasing or decreasing differences in parameters and the composite belief.

\begin{assumption}\label{increasingdiff}
Each player $i$'s expected utility function has weakly increasing differences. That is, for any $(\theta_i',\beta_i',f_i') \geq (\theta_i,\beta_i,f_i)$ and $c_i' \geq c_i $ it holds that

$$U_i(\theta_i',c_i',\beta_i',f_i')-U_i(\theta_i',c_i,\beta_i',f_i') \geq U_i(\theta_i,c_i',\beta_i,f_i)-U_i(\theta_i,c_i,\beta_i,f_i).$$
\end{assumption}
Similarly, player $i$'s expected utility function would have weakly decreasing differences if the inequality in Assumption \ref{increasingdiff} would be reversed. 
\newline
\newline
We next assume that there exists a unique optimal choice.
\begin{assumption}
For every choice belief $\beta_i$, parameter belief $f_i$ and parameter $\theta_i$, there is a unique $c_i \in C_i$ that is optimal for $(\theta_i,\beta_i,f_i)$.
\end{assumption}

For the next assumption, we assume a form of continuity in the game.
\begin{assumption}\label{continuity}
Consider $c_i < c_i'$ and let $c_i$ be optimal for player $i$ for $(\theta_i,\beta_i,f_i)$ and let $c_i'$ be optimal for $(\theta_i,\beta_i',f_i')$. Now consider some choice $c_i''$ such that $c_i< c_i'' <c_i'$. Then there exists some choice belief $\beta_i''$ and parameter belief $f_i''$ such that $c_i''$ is optimal for $(\theta_i,\beta_i'',f_i'')$, where $\beta_i''= (1-\lambda) \cdot \beta_i + \lambda \cdot \beta_i'$ and $f_i''= (1-\lambda) \cdot f_i + \lambda \cdot f_i'$, for some $\lambda \in [0,1]$ .
\end{assumption}

Lastly, we assume that for each player, there exists a parameter belief that first-order stochastically dominates all other parameter beliefs and that there exists a parameter belief that is first-order stochastically dominated by all other parameter beliefs.
\begin{assumption}
For each player $i$ there exists a $f_i',f_i'' \in M_i$ with $f_i' \geq f_i$ and $f_i'' \leq f_i$ for all $f_i \in M_i$.    
\end{assumption}

 Checking whether Assumption 1 holds might be a difficult task in practice. The following lemma shows that under certain conditions on the utility function of the players, Assumption 1 is implied.

 \begin{lemma}\label{increasingdifflemma}
Let $\frac{\partial^2u_i}{\partial c_i \partial \theta_i} \geq 0 $, $\frac{\partial^2u_i}{\partial c_i \partial c_j} \geq 0 \ \forall j \neq i$ and $\frac{\partial^3u_i}{\partial c_i \partial c_j \partial c_l} = 0 \ \forall j, l \neq i$ and $j\neq l$ . Then player $i$'s expected utility function has weakly increasing differences. Similarly, if $\frac{\partial^2u_i}{\partial c_i \partial \theta_i} \leq 0 $, $\frac{\partial^2u_i}{\partial c_i \partial c_j} \leq 0 \ \forall j \neq i$ and $\frac{\partial^3u_i}{\partial c_i \partial c_j \partial c_l} = 0 \ \forall j,l \neq i$, then player i's expected utility function has weakly decreasing differences.    
\end{lemma}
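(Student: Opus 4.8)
The plan is to prove the increasing-differences inequality by expressing the relevant utility difference as an integral over opponents' parameter profiles, integrating by parts to convert first-order stochastic dominance of the composite beliefs into a usable form, and then checking that the integrand is monotone in the right direction using the stated sign conditions on the partial derivatives. I would start from the second (direct) formula for $U_i$ in Definition 3, so that for a fixed choice pair $c_i' \geq c_i$ the quantity $\Delta(\theta_i,\beta_i,f_i) := U_i(\theta_i,c_i',\beta_i,f_i) - U_i(\theta_i,c_i,\beta_i,f_i)$ is an expectation, with respect to $\prod_{j\neq i} f_{ij}$, of the ``choice-difference'' function $g(\theta_i, (\beta_{ij}(\theta_j))_{j\neq i}) := u_i(\theta_i,c_i',(\beta_{ij}(\theta_j))_{j\neq i}) - u_i(\theta_i,c_i,(\beta_{ij}(\theta_j))_{j\neq i})$. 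Using the change of variables already implicit in Definition 2, this equals $\int_{C_{-i}} (\beta_i\circ f_i)(c_{-i}) \cdot \big(u_i(\theta_i,c_i',c_{-i}) - u_i(\theta_i,c_i,c_{-i})\big)\, dc_{-i}$, i.e. an expectation of $h_{\theta_i}(c_{-i}) := u_i(\theta_i,c_i',c_{-i}) - u_i(\theta_i,c_i,c_{-i})$ under the composite belief.

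The first key step is to show $\Delta(\theta_i',\beta_i',f_i') \geq \Delta(\theta_i',\beta_i,f_i)$ whenever $(\beta_i'\circ f_i') \geq (\beta_i\circ f_i)$; combined with monotonicity of $\Delta$ in $\theta_i$ this gives the assumption. For the $\theta_i$-direction, $\partial h_{\theta_i}/\partial\theta_i = \int_{c_i}^{c_i'} \frac{\partial^2 u_i}{\partial c_i\partial\theta_i}(\theta_i,t,c_{-i})\,dt \geq 0$ by the first hypothesis, so $\Delta$ is nondecreasing in $\theta_i$ pointwise under any fixed belief. For the belief-direction, I would show that $h_{\theta_i}(c_{-i}) = \int_{c_i}^{c_i'} \frac{\partial u_i}{\partial c_i}(\theta_i,t,c_{-i})\,dt$ is nondecreasing in $c_{-i}$ (in the product order): its derivative in $c_j$ is $\int_{c_i}^{c_i'}\frac{\partial^2 u_i}{\partial c_i\partial c_j}\,dt \geq 0$ by the second hypothesis. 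Then, since the composite belief is a product measure $\prod_{j\neq i}(\beta_{ij}\circ f_{ij})$ and each marginal $(\beta_{ij}'\circ f_{ij}')$ first-order stochastically dominates $(\beta_{ij}\circ f_{ij})$, a standard one-coordinate-at-a-time argument (swap marginals one index at a time, using FOSD against a function that is monotone in that coordinate, holding the others fixed) yields $\E_{(\beta_i'\circ f_i')}[h_{\theta_i}] \geq \E_{(\beta_i\circ f_i)}[h_{\theta_i}]$. This is exactly $\Delta(\theta_i,\beta_i',f_i') \geq \Delta(\theta_i,\beta_i,f_i)$.

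The role of the third-derivative hypothesis $\frac{\partial^3 u_i}{\partial c_i\partial c_j\partial c_l} = 0$ is the subtle point and, I expect, the main obstacle. Monotonicity of $h_{\theta_i}$ in the product order on $C_{-i}$ requires only $\frac{\partial^2 u_i}{\partial c_i\partial c_j}\geq 0$, so at first glance the third-derivative condition looks superfluous for the FOSD step above. Its real purpose is that FOSD of the \emph{marginals} of the composite belief does not by itself imply any ordering of the joint composite belief on $C_{-i}$ unless the function being integrated is, coordinatewise, such that the product-measure swap is valid — which is fine here — but more importantly the \emph{supermodularity/separability} of $h_{\theta_i}$ in the opponents' coordinates is what makes ``greater composite belief'' translate cleanly without cross-interaction terms corrupting the sign; $\frac{\partial^3 u_i}{\partial c_i\partial c_j\partial c_l}=0$ forces $\frac{\partial^2 u_i}{\partial c_i\partial c_j}$ to be independent of $c_l$, so that $h_{\theta_i}$ is additively separable across the opponents' coordinates, $h_{\theta_i}(c_{-i}) = \sum_{j\neq i}\phi_j(c_j) + \text{const}$ in the relevant sense. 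Separability reduces the joint comparison to a sum of one-dimensional FOSD comparisons, each handled by integration by parts against the nondecreasing $\phi_j$. I would therefore: (i) establish the separable representation of $h_{\theta_i}$ from the vanishing third derivative; (ii) apply one-dimensional FOSD (Definition 9) termwise; (iii) add the $\theta_i$-monotonicity; and (iv) note the decreasing-differences case follows by replacing $u_i$ with $-u_i$, which flips all the sign hypotheses and the conclusion. The routine analytic hygiene (Fubini to exchange the $\int_{c_i}^{c_i'}$ and the expectation, integrability from continuity on compact $C_{-i}$, and justifying differentiation under the integral) I would relegate to remarks rather than belabor.
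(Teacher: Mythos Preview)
Your proposal is correct and shares the paper's overall skeleton: split the comparison into a $\theta_i$-step (pointwise monotonicity of the difference $h_{\theta_i}$ in $\theta_i$, from $\partial^2 u_i/\partial c_i\partial\theta_i\geq 0$) and a belief-step (FOSD comparison of the composite beliefs at fixed $\theta_i$), then combine. The difference lies in how the belief-step is executed. The paper isolates a separate auxiliary result (Lemma~\ref{mass colel}) proved by induction on the number of opponents via repeated integration by parts, and the vanishing cross-partials $\partial^2 h_{\theta_i}/\partial c_j\partial c_l=0$ are precisely what kill the mixed boundary terms that appear at each step of that induction. Your route instead observes that these same vanishing cross-partials force $h_{\theta_i}$ to be additively separable across opponents, $h_{\theta_i}(c_{-i})=\sum_{j\neq i}\phi_j(c_j)+\mathrm{const}$, after which the product-measure comparison collapses to a sum of one-dimensional FOSD inequalities. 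This is more elementary than the paper's integration-by-parts induction and makes the role of the hypothesis transparent. One further remark: your suspicion that the third-derivative hypothesis ``looks superfluous'' for the coordinatewise-swap argument is in fact correct --- since the composite belief is a product measure (Definition~2) and $h_{\theta_i}$ is coordinatewise nondecreasing, swapping one marginal at a time already yields the inequality without any separability. Both the paper's proof and your separability route \emph{use} the hypothesis, but neither strictly \emph{needs} it for product composite beliefs; your attempt to manufacture a genuine obstruction (``cross-interaction terms corrupting the sign'') is therefore off the mark, as no such obstruction exists for coordinatewise-monotone integrands against product measures.
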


Lemma \ref{monotonicity} shows that if assumption 1 and 2 hold, then greater choices are optimal for greater composite beliefs and parameter values. 
\begin{lemma}\label{monotonicity}
 Let Assumption 1 and 2 hold. Consider player $i$ and $(\theta_i,\beta_i,f_i) \leq (\theta_i',\beta_i',f_i')$. If $c_i$ is optimal for $(\theta_i,\beta_i,f_i)$ and  $c_i'$ is optimal for $(\theta_i',\beta_i',f_i')$ , then $c_i \leq c_i'$.
\end{lemma}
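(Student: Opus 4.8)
The plan is to argue by contradiction. Suppose $c_i$ is optimal for $(\theta_i,\beta_i,f_i)$ and $c_i'$ is optimal for $(\theta_i',\beta_i',f_i')$ with $(\theta_i,\beta_i,f_i)\leq(\theta_i',\beta_i',f_i')$, but $c_i > c_i'$. The idea is to combine the optimality of $c_i$ at the lower profile with the optimality of $c_i'$ at the higher profile and show that weakly increasing differences forces these to clash unless $c_i\leq c_i'$. First I would write down the two optimality inequalities: since $c_i$ is optimal in $C_i$ for $(\theta_i,\beta_i,f_i)$, in particular $U_i(\theta_i,c_i,\beta_i,f_i)\geq U_i(\theta_i,c_i',\beta_i,f_i)$, i.e. $U_i(\theta_i,c_i,\beta_i,f_i)-U_i(\theta_i,c_i',\beta_i,f_i)\geq 0$. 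Likewise, since $c_i'$ is optimal for $(\theta_i',\beta_i',f_i')$, $U_i(\theta_i',c_i',\beta_i',f_i')-U_i(\theta_i',c_i,\beta_i',f_i')\geq 0$, equivalently $U_i(\theta_i',c_i,\beta_i',f_i')-U_i(\theta_i',c_i',\beta_i',f_i')\leq 0$.

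Next, apply Assumption \ref{increasingdiff}. Under the contradiction hypothesis $c_i > c_i'$, so $c_i$ plays the role of the ``larger'' choice and $c_i'$ the ``smaller'' one, and $(\theta_i',\beta_i',f_i')\geq(\theta_i,\beta_i,f_i)$ is exactly the hypothesis of the lemma. Weakly increasing differences then gives
$$U_i(\theta_i',c_i,\beta_i',f_i')-U_i(\theta_i',c_i',\beta_i',f_i')\geq U_i(\theta_i,c_i,\beta_i,f_i)-U_i(\theta_i,c_i',\beta_i,f_i).$$
The right-hand side is $\geq 0$ by optimality of $c_i$ at the low profile, so the left-hand side is $\geq 0$. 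Combined with the reversed inequality $U_i(\theta_i',c_i,\beta_i',f_i')-U_i(\theta_i',c_i',\beta_i',f_i')\leq 0$ obtained from optimality of $c_i'$ at the high profile, we conclude that this difference equals $0$, i.e. $U_i(\theta_i',c_i,\beta_i',f_i')=U_i(\theta_i',c_i',\beta_i',f_i')$. Hence $c_i$ is also a maximizer at $(\theta_i',\beta_i',f_i')$. But then uniqueness of the optimal choice (Assumption 2) forces $c_i=c_i'$, contradicting $c_i>c_i'$. Therefore $c_i\leq c_i'$, as claimed.

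The step I expect to be the only subtle point is the last one: without Assumption 2, the argument above only yields that $c_i$ is \emph{also} optimal at the high profile (a weak, ``co-monotone selection'' conclusion), not the strict statement $c_i\leq c_i'$. Invoking uniqueness closes this gap cleanly, so the proof is short; the main thing to be careful about is correctly lining up which arguments are the ``primed/larger'' ones when applying Assumption \ref{increasingdiff}, and noting that optimality in $C_i$ (not merely in some subset $A_i$) is what is used here so that the comparison choices $c_i,c_i'$ are legitimate deviations for both players' problems.
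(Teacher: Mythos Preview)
Your proof is correct and follows essentially the same route as the paper's: argue by contradiction assuming $c_i>c_i'$, combine the two optimality inequalities with Assumption~\ref{increasingdiff}, and use Assumption~2 to close the gap. The only cosmetic difference is where uniqueness is invoked: the paper uses Assumption~2 up front to make the two optimality inequalities strict (since $c_i\neq c_i'$, the suboptimal choice yields strictly less), which immediately contradicts the increasing-differences inequality; you instead keep the optimality inequalities weak, deduce equality $U_i(\theta_i',c_i,\beta_i',f_i')=U_i(\theta_i',c_i',\beta_i',f_i')$, and then invoke uniqueness at the end. Both orderings are valid and the argument is the same.
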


Theorem 1 shows that for each player, the set of point rationalizable choices is an interval and that the greatest and least choice than can be chosen after round $k$ of the iterative procedure can be found by investigating the greatest and lowest composite belief. Furthermore, these lowest and greatest choices are both weakly increasing in the value of the parameter.

\begin{theorem}
 Let Assumptions 1,2,3 and 4 hold and consider player $i$. Then $P_i^k(\theta_i)=[l_i^k(\theta_i),u_i^k(\theta_i)]$, where $l_i^k(\theta_i)$ and $u_i^k(\theta_i)$ denote the smallest and greatest choice that player $i$ with parameter $\theta_i$ can choose after round $k$ of the iterative procedure.  If $\theta_i < \theta_i'$, then $l_i^k(\theta_i) \leq l_i^k(\theta_i')$ and $u_i^k(\theta_i) \leq u_i^k(\theta_i')$ for every $k \in \mathbb{N}$. Moreover,
 $$l_i^k(\theta_i)=\{  c_i \in P_i^{k-1}(\theta_i) | \ c_i \ \text{is optimal in} \ P_i^{k-1}(\theta_i) \ \text{for the choice belief} \ \beta_i'' \ \text{such that}$$
$$  \ \beta_{ij}''(\theta_j)= \min P_j^{k-1}(\theta_j) \ \forall \theta_j \in \Theta_j \ \text{and} \ j \neq i \ \text{and a parameter belief} \ f_i'' \ \text{such that} \ f_i'' \leq f_i \ \forall f_i \in M \}.$$
Similarly, we have that 
$$u_i^k(\theta_i)=\{ c_i \in P_i^{k-1}(\theta_i) | \ c_i \ \text{is optimal in} \ P_i^{k-1}(\theta_i) \ \text{for the choice belief} \ \beta_i' \ \text{such that}$$
$$  \ \beta_{ij}'(\theta_j)= \max P_j^{k-1}(\theta_j) \ \forall \theta_j \in \Theta_j \ \text{and} \ j \neq i \ \text{and a parameter belief} \ f_i' \ \text{such that} \ f_i' \geq f_i \ \forall f_i \in M \}$$
Furthermore, we have that $P_i(\theta_i)=[ l_i(\theta_i),u_i(\theta_i)]$, where $l_i(\theta_i) \leq l_i(\theta_i') \ \text{and} \ u_i(\theta_i) \leq u_i(\theta_i')$ whenever $\theta_i < \theta_i'$.
 \end{theorem}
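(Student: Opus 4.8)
The proof proceeds by induction on $k$, with the inductive hypothesis being the full statement: $P_i^k(\theta_i) = [l_i^k(\theta_i), u_i^k(\theta_i)]$ for all $i$ and $\theta_i$, that $l_i^k, u_i^k$ are weakly increasing in $\theta_i$, and that the endpoints are characterized by the displayed optimality conditions against the extreme choice-parameter belief pairs. The base case $k=0$ is immediate since $P_i^0(\theta_i) = C_i = [\underline{c}_i, \bar{c}_i]$, which is increasing (constantly) in $\theta_i$.

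For the inductive step, fix $k \geq 1$ and assume the statement for $k-1$. First I would establish that the extreme beliefs are \emph{available} in round $k$: the choice belief $\beta_i'$ with $\beta_{ij}'(\theta_j) = \max P_j^{k-1}(\theta_j)$ is well-defined (the max exists since $P_j^{k-1}(\theta_j)$ is a closed interval by the inductive hypothesis) and satisfies $\beta_{ij}'(\theta_j) \in P_j^{k-1}(\theta_j)$, so it is a legitimate choice belief for the round-$k$ procedure; similarly for $\beta_i''$ with the min, and the extreme parameter beliefs $f_i', f_i''$ exist by Assumption 4. Crucially, among all admissible $(\beta_i, f_i)$ for round $k$, the pair $(\beta_i', f_i')$ induces the largest composite belief $(\beta_i' \circ f_i')$ in the first-order stochastic dominance order, and $(\beta_i'', f_i'')$ the smallest — this needs a short argument: $\beta_i' \geq \beta_i$ pointwise for any admissible $\beta_i$, and same-parameter-belief comparison plus $f_i' \geq f_i$ gives the stochastic dominance (using the remarks after Definition 10 about how $\beta_i' \geq \beta_i$ with fixed $f$ implies $(\beta_i' \circ f) \geq (\beta_i \circ f)$, then $f_i' \geq f_i$ with fixed $\beta_i'$ giving a further increase). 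Then by Lemma \ref{monotonicity} (Assumptions 1, 2), the unique choice optimal in $P_i^{k-1}(\theta_i)$ against $(\beta_i', f_i')$ is $\geq$ the choice optimal against any other admissible pair, and against $(\beta_i'', f_i'')$ it is $\leq$; so these define $u_i^k(\theta_i)$ and $l_i^k(\theta_i)$ respectively, giving the displayed characterizations. (A small point: Lemma 2 gives optimality in all of $C_i$; I need optimality within the subset $P_i^{k-1}(\theta_i)$, so I would either re-run the monotonicity argument restricted to that interval or note that the interval structure lets the same single-crossing argument go through.)

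Next, the interval claim: $P_i^k(\theta_i)$ clearly contains $l_i^k(\theta_i)$ and $u_i^k(\theta_i)$. For any $c_i''$ strictly between them, Assumption \ref{continuity} supplies beliefs $\beta_i'' = (1-\lambda)\beta_i + \lambda \beta_i'$ and $f_i'' = (1-\lambda) f_i + \lambda f_i'$ for which $c_i''$ is optimal — here I take the "low" beliefs to be $(\beta_i'', f_i'')$ (the minimizers) and the "high" beliefs to be $(\beta_i', f_i')$ (the maximizers) from the previous paragraph. I must check the convex combinations are still admissible for round $k$: since $P_j^{k-1}(\theta_j)$ is convex, $\beta_{ij}''(\theta_j) = (1-\lambda)\min P_j^{k-1}(\theta_j) + \lambda \max P_j^{k-1}(\theta_j) \in P_j^{k-1}(\theta_j)$, and $M_i$ — I would need convexity of $M_i$, or at least that such intermediate parameter beliefs lie in $M_i$; if the paper does not assume $M_i$ convex this is a gap I would flag, but plausibly Assumption \ref{continuity} is meant to be invoked with beliefs that are guaranteed admissible. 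Also I should confirm optimality in $C_i$ versus in $P_i^{k-1}(\theta_i)$ lines up — since $c_i'' \in [l_i^k(\theta_i), u_i^k(\theta_i)] \subseteq P_i^{k-1}(\theta_i)$ and the optimum is unique, optimality in the larger set implies optimality in the subset. Hence $P_i^k(\theta_i)$ is exactly the interval $[l_i^k(\theta_i), u_i^k(\theta_i)]$.

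For monotonicity in $\theta_i$: take $\theta_i < \theta_i'$. By the inductive hypothesis, $\min P_j^{k-1}(\theta_j)$ and $\max P_j^{k-1}(\theta_j)$ don't depend on $\theta_i$, so the same extreme choice belief $\beta_i'$ (resp. $\beta_i''$) works for both $\theta_i$ and $\theta_i'$, and the extreme parameter beliefs from Assumption 4 are likewise common. Thus $u_i^k(\theta_i)$ is optimal against $(\theta_i, \beta_i', f_i')$ and $u_i^k(\theta_i')$ against $(\theta_i', \beta_i', f_i')$; since $(\theta_i, \beta_i', f_i') \leq (\theta_i', \beta_i', f_i')$ (only the parameter coordinate changed, upward), Lemma \ref{monotonicity} gives $u_i^k(\theta_i) \leq u_i^k(\theta_i')$, and the identical argument with $(\beta_i'', f_i'')$ gives $l_i^k(\theta_i) \leq l_i^k(\theta_i')$. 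Finally, for the limit: $P_i(\theta_i) = \bigcap_k P_i^k(\theta_i)$ is a nested intersection of closed intervals (the sequences $l_i^k(\theta_i)$ are weakly increasing in $k$ and $u_i^k(\theta_i)$ weakly decreasing, both bounded, hence convergent), so $P_i(\theta_i) = [l_i(\theta_i), u_i(\theta_i)]$ with $l_i(\theta_i) = \sup_k l_i^k(\theta_i)$, $u_i(\theta_i) = \inf_k u_i^k(\theta_i)$; monotonicity in $\theta_i$ passes to these sup/inf limits, and the interval is nonempty since $l_i^k(\theta_i) \leq u_i^k(\theta_i)$ for every $k$.

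\textbf{Main obstacle.} The delicate point is the interaction between "optimal in $C_i$" (what Lemmas 2 and \ref{monotonicity} deliver) and "optimal in $P_i^{k-1}(\theta_i)$" (what the procedure in Definition 6 requires), together with verifying that the convex-combination beliefs from Assumption \ref{continuity} remain admissible (requiring convexity of $P_j^{k-1}(\theta_j)$, which the induction gives, and of $M_i$, which may need to be assumed or derived). I expect most effort to go into making the monotonicity-on-a-subinterval argument airtight — essentially re-deriving Lemma \ref{monotonicity} with $C_i$ replaced by the closed subinterval $P_i^{k-1}(\theta_i)$, which works because increasing differences is inherited by restriction to any subinterval and the interval structure of $P_i^{k-1}(\theta_i)$ is exactly what the argument needs.
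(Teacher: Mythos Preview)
Your overall strategy is correct and close to the paper's, but the ordering differs: the paper first proves the interval structure (via Assumption~3), then proves monotonicity in $\theta_i$ by a direct contradiction argument using only Assumption~1, and only \emph{afterwards} establishes the endpoint characterization via the extreme beliefs. You instead prove the endpoint characterization first and then derive monotonicity from it via Lemma~\ref{monotonicity} applied to $(\theta_i,\beta_i',f_i')$ versus $(\theta_i',\beta_i',f_i')$. Your route is more streamlined, since once the characterization is in hand, monotonicity is a one-line consequence of Lemma~\ref{monotonicity}; the paper's contradiction argument for monotonicity is longer and does not use the characterization at all. Both orderings are valid because you carry the full statement (including monotonicity at level $k-1$) in the inductive hypothesis.

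There is, however, one genuine gap in your argument for the stochastic-dominance step. You claim $(\beta_i'\circ f_i')\geq(\beta_i\circ f_i)$ by first using $\beta_i'\geq\beta_i$ with fixed $f_i$ (correct, and this is the remark after Definition~10), and then ``$f_i'\geq f_i$ with fixed $\beta_i'$ giving a further increase.'' That second implication is \emph{false in general}: the remarks you cite actually warn that $f_i'\geq f_i$ and $\beta_i'\geq\beta_i$ together do not imply composite dominance when the choice belief is decreasing. What makes it work here is that $\beta_{ij}'(\theta_j)=\max P_j^{k-1}(\theta_j)=u_j^{k-1}(\theta_j)$ is \emph{weakly increasing} in $\theta_j$ by your inductive hypothesis, so $\{\theta_j:\beta_{ij}'(\theta_j)\geq c_j'\}$ is an upper interval and FOSD of $f_{ij}'$ over $f_{ij}$ then gives $(\beta_{ij}'\circ f_{ij}')\geq(\beta_{ij}'\circ f_{ij})$. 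The paper's proof makes exactly this point explicit. You should invoke the monotonicity of $u_j^{k-1}$ (and of $l_j^{k-1}$ for the lower endpoint) from the inductive hypothesis at precisely this step; without it the argument does not go through. Your other flags (convexity of $M_i$ for the Assumption~3 step, and optimality in $C_i$ versus in $P_i^{k-1}(\theta_i)$) are legitimate and the paper is equally silent on them; the latter is handled by observing inductively that any belief admissible at round $k$ is also admissible at all earlier rounds, so the unique $C_i$-optimum already lies in $P_i^{k-1}(\theta_i)$.
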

 Theorem 1 shows that if a player would gain new information that changes the set of parameter beliefs, then the set of point rationalizable choices does not change as long as the highest and the lowest parameter belief of this player remains the same. 
 
 \section{Applications}
 \subsection{Bertrand Model with Differentiated Goods}
Consider a Bertrand model with two firms and differentiated goods with the following profit functions. 
 $$\pi_1(\theta_1,p_1,p_2)=(p_1-\theta_{1})(a- p_1+ p_2) \ \text{and}$$
 
 $$\pi_2(\theta_2,p_2,p_1)=(p_2-\theta_{2})(a- p_2+ p_1),$$
 where $p_1,p_2$ are the prices that the firms choose, and $\theta_1,\theta_2$ are the marginal costs of the firm.
 Analyzing the profit functions, we observe that a firm's received demand is positively correlated with the opponents' price and negatively correlated with its own price. When a firm increases its price, it experiences a gradual decline in demand rather than an immediate drop. Moreover, firms have knowledge of their own marginal costs but face uncertainty regarding their opponents' marginal costs. However, they do have certainty that the opponents' costs fall within a specified range or domain. Let $\Theta_1=\Theta_2=[0, \phi]$. Let $M_i=\{f_i^{\alpha} | \alpha \in [0,\frac{2}{\phi}] \}$, where 

 \[   
f_{ij}^{\alpha}(\theta_j)=
     \begin{cases}
      \alpha  & \text{if} \ \theta_j \in [0,\frac{\phi}{2}] \\
     \frac{2}{\phi}- \alpha & \text{if}\ \theta_j \in (\frac{\phi}{2},\phi] \\
     \end{cases}
\]
 Let each firm have a choice set of $C_1=C_2=[0,\Bar{p}]$. We have that $\frac{\partial^2\pi_1}{\partial p_1 \partial \theta_{1}}= 1 >0$, and $\frac{\partial^2\pi_1}{\partial p_1 \partial p_2}=1 \geq 0$, and similarly for firm 2. The third order condition for Lemma \ref{increasingdifflemma} is automatically satisfied as this condition is only necessary to check for three or more players. Hence, by Lemma \ref{increasingdifflemma} the expected utility function of each firm has weakly increasing differences and Assumption 1 holds. 
 \newline
 \newline
 Consider firm 1 with valuation $\theta_1$, choice belief $\beta_1$ and parameter belief $f_1^{\alpha}$. The expected utility function of firm 1 can be denoted as
 
   $$U_1(\theta_1,p_1,\beta_1,f_1^{\alpha})=\alpha \cdot \int_0^{\frac{\phi}{2}}  (p_1-\theta_{1})(a- p_1+\beta_{12}(\theta_{2})) \, d\theta_{2} + (\frac{2}{\phi}-\alpha) \cdot \int_{\frac{\phi}{2}}^{\phi} (p_1-\theta_{1})(a- p_1+ \beta_{12}(\theta_{2})) \, d\theta_{2} $$
   
    $$= (p_1-\theta_1)(a-p_1 + \alpha \cdot  \int_0^{\frac{\phi}{2}} \beta_{12}(\theta_{2})  \, d\theta_{2} + (\frac{2}{\phi}-\alpha) \int_{\frac{\phi}{2}}^{\phi} \beta_{12}(\theta_{2})  \, d\theta_{2}  )$$
    
Hence, we can rewrite the expected utility function of firm 1 to

$$U_1(\theta_1,p_1,\beta_1,f_1^{\alpha})= (p_1-\theta_{1})(a- p_1+\E (p_2 \ |\beta_1,f_1^{\alpha}))$$

where $ \E (p_2 \ |\beta_1,f_1^{\alpha})=  \alpha \cdot  \int_0^{\frac{\phi}{2}} \beta_{12}(\theta_{2})  \, d\theta_{2} + (\frac{2}{\phi}-\alpha) \int_{\frac{\phi}{2}}^{\phi} \beta_{12}(\theta_{2})  \, d\theta_{2}  $. Hence, for any $(\theta_1, \beta_1, f_1^{\alpha})$, the unique optimal choice of firm 1 is given by $p_1= \frac{a + \theta_{1}}{2}+ \frac{1}{2} \cdot \E (p_2 \ |\beta_1,f_1^{\alpha})$. As a result, Assumption 2 holds. 
\newline
\newline
To show that Assumption 3 holds, let $p_1$ be optimal for $(\theta_1, \beta_1,f_1)$ and let $p_1' > p_1$ be optimal for $(\theta_1,\beta_1',f_1')$. Now consider some $p_1 < p_1'' < p_1'$. Consider the function $g(\lambda)=\frac{a+\theta_1}{2} + \frac{1}{2}\E(p_2 \ |\lambda \cdot \beta_1 + (1-\lambda) \cdot \beta_1', \lambda \cdot f_1 + (1-\lambda) \cdot f_1')$. Because $g$ is continuous, by the intermediate value theorem, as $g(0) = p_1'$ and $g(1)= p_1,$ there exists some $\lambda \in (0,1)$ such that $g(\lambda)= p_1''$. Hence, $p_1''$ is optimal for $(\theta_1,\lambda \cdot \beta_1 + (1-\lambda) \cdot \beta_1', \lambda \cdot f_1 + (1-\lambda) \cdot f_1')$.
\newline     
\newline   
Finally, Assumption 4 holds as well as $f_i^{0} \geq f_i^{\alpha} \ \forall \alpha \in [0,\frac{2}{\phi}]$ and $f_i^{\frac{2}{\phi}} \leq f_i^{\alpha} \ \forall \alpha \in [0,\frac{2}{\phi}]$. 
\newline
\newline
Theorem 1 then implies that we can find the lower and upper bound after round 1 of the iterative procedure by investigating the lowest possible price that firm 2 can choose together with the lowest possible parameter belief $f_1^{\frac{\phi}{2}}$ and the highest possible price that firm 2 can choose together with the highest possible parameter belief $f_1^{0}$. We will show that for every $k \geq 1$, the set of choices of a firm with parameter $\theta_i$ that survive round $k$ of the procedure can be denoted as

$$P_i^k(\theta_i)=[(1-\frac{1}{2^k})a + (\frac{1}{8}-\frac{1}{2^{k+2}})\phi + \frac{1}{2}\theta_i,(1-\frac{1}{2^k})a + (\frac{3}{8}-\frac{3}{2^{k+2}})\phi + \frac{1}{2^k}\Bar{p} + \frac{1}{2}\theta_i ]$$ 

Consider round 1 of the iterative procedure. Consider any parameter belief and the highest choice belief $\beta_1^{**}$, where $\beta_{12}^{**}(\theta_2)=\Bar{p} \ \forall \theta_2 \in \Theta_2$. Then $\E (p_2 \ |\beta_1^{**},f_1^{\alpha})=\Bar{p}$ and 
the optimal price of firm 1 is then $p_1=\frac{a + \theta_{1}}{2}+ \frac{1}{2} \cdot \E (p_2 \ |\beta_1^{**},f_1^{\alpha})=\frac{a+\Bar{p}+\theta_1}{2}$
\newline
\newline
Similarly, when we consider consider the lowest possible choice belief $\beta_1^*$, where $\beta_{12}^*(\theta_2)=0 \ \forall \theta_2 \in \Theta_2$, then this leads to an optimal price of $\frac{a+\theta_{1}}{2}$.
\newline
\newline
Hence, we have that $P_1^1(\theta_1)=[\frac{a+\theta_{1}}{2},\frac{a+\Bar{p}+\theta_{1}}{2}]$ and similarly $P_2^1(\theta_2)=[\frac{a+\theta_{2}}{2},\frac{a+\Bar{p}+\theta_{2}}{2}]$.
\newline
\newline
Now consider round $k \geq 2$ of the iterative procedure and assume that 
$$P_1^{k-1}(\theta_i)=[(1-\frac{1}{2^{k-1}})a + (\frac{1}{8}-\frac{1}{2^{k+1}})\phi + \frac{1}{2}\theta_1,(1-\frac{1}{2^{k-1}})a + (\frac{3}{8}-\frac{3}{2^{k+1}})\phi + \frac{1}{2^{k-1}}\Bar{p} + \frac{1}{2}\theta_1 ]$$ 

and similarly for firm 2. 
\newline
\newline
Consider firm 1. The highest parameter belief of firm 1 is $f_1^{0}$ and the highest choice belief is  $\beta_1^{**}$, where $\beta_{12}^{**}(\theta_2)=(1-\frac{1}{2^{k-1}})a + (\frac{3}{8}-\frac{3}{2^{k+1}})\phi + \frac{1}{2^{k-1}}\Bar{p} + \frac{1}{2}\theta_2 \ \forall \theta_2 \in \Theta_2$. Then $\E (p_2 \ |\beta_1^{**},f_1^0)=(1-\frac{1}{2^{k-1}})a + (\frac{3}{8}-\frac{3}{2^{k+1}})\phi + \frac{1}{2^{k-1}}\Bar{p} + \frac{3}{8} \phi$. The optimal choice of firm 1 is then given by $p_1= \frac{a+\theta_1}{2}+ \frac{1}{2} \E (p_2 \ |\beta_1^{**},f_1^0)= \frac{a+\theta_1}{2}+ (\frac{1}{2}-\frac{1}{2^{k}})a + (\frac{3}{16}-\frac{3}{2^{k+2}})\phi + \frac{1}{2^{k}}\Bar{p} + \frac{3}{16} \phi= (1-\frac{1}{2^k})a + (\frac{3}{8}-\frac{3}{2^{k+2}})\phi + \frac{1}{2^k}\Bar{p} + \frac{1}{2}\theta_1 $. The proof of the lower bound is very similar, which completes the proof. 
\newline
\newline
Finally, we are able to investigate the set of point rationalizable choices. We have that

$$P_i(\theta_i)=[a +\frac{1}{8} \phi + \frac{1}{2}\theta_i,a + \frac{3}{8}\phi + \frac{1}{2}\theta_i ]$$

Both the upper bound and lower bound of the set of point rationalizable choices of a firm is increasing in $a$, $\phi$ and $\theta_i$. An increase in $\theta_i$ results in a higher price for firm $i$ as it needs to compensate for the rise in costs. Furthermore, this is true because of Theorem 1. Similarly, an increase in $\phi$ leads to a higher price because it implies that firm $i$ anticipates firm $j$ choosing a higher price, allowing firm $i$ to raise its own price accordingly.

\subsection{Cournot Model with Demand Uncertainty}
Consider a Cournot model with two firms with the following profit functions. Instead of choosing a price, in this model the firms choose a quantity they will sell. The profit functions of firm 1 and 2 are given by:

$$\pi_1(\theta_1,q_1,q_2)= (a-\theta_{1}(q_1+q_2)-c)q_1$$

$$\pi_2(\theta_2,q_2,q_1)= (a-\theta_{2}(q_1+q_2)-c)q_2$$
In this model, the firms know both have the same constant marginal costs given by $c$, but they are uncertain how the demand will change when total quantity changes. The parameters  $\theta_{1}$ and $\theta_{2}$ represent a belief of firms 1 and 2 about the price elasticity of demand. Let $\Theta_1=\Theta_2=[\underline{\phi}, \Bar{\phi}]$. Let $M_i=\{f_i^{\alpha} | \alpha \in [0,\frac{2}{\Bar{\phi}-\underline{\phi}}] \}$, where 

 \[   
f_{ij}^{\alpha}(\theta_j)=
     \begin{cases}
      \alpha  & \text{if} \ \theta_j \in [\underline{\phi},\frac{\Bar{\phi}+\underline{\phi}}{2}] \\
     \frac{2}{\Bar{\phi}-\underline{\phi}}- \alpha & \text{if}\ \theta_j \in (\frac{\Bar{\phi}+\underline{\phi}}{2},\Bar{\phi}] \\
     \end{cases}
\]
 Let each firm have a choice set of $C_1=C_2=[0,\Bar{q}]$. Contrary to the Bertrand example, we now have that $\frac{\partial^2\pi_1}{\partial q_1 \partial \theta_{1}}=-2q_1-q_2 \leq 0$, and $\frac{\partial^2\pi_1}{\partial q_1 \partial q_2}=-\theta_{1} \leq 0$, and similarly for firm 2. Similarly to the Bertrand example, the third order condition for Lemma \ref{increasingdifflemma} is automatically satisfied. Hence, by Lemma \ref{increasingdifflemma}  the expected utility function of each firm has weakly decreasing differences and Assumption 1 holds.
\newline
\newline
The expected utility of firm 1 with parameter $\theta_1$, parameter belief $f_1^{\alpha}$ and choice belief $\beta_{1}$ can then be denoted as 

$$U_1(\theta_1,q_1,\beta_1,f_1^{\alpha})= \alpha \cdot q_1 \cdot  \int_{\underline{\phi}}^{\frac{\Bar{\phi}+\underline{\phi}}{2}}(a-\theta_{1} \cdot q_1- \theta_{1} \cdot \beta_{12}(\theta_{2})-c) \, d\theta_{2}$$ 
$$+ ( \frac{2}{\Bar{\phi}-\underline{\phi}}- \alpha) \cdot q_1 \cdot  \int_{\frac{\Bar{\phi}+\underline{\phi}}{2}}^{\Bar{\phi}}(a-\theta_{1} \cdot q_1- \theta_{1} \cdot \beta_{12}(\theta_{2})-c) \, d\theta_{2}$$

Hence, we can rewrite the expected utility function of firm 1 to

$$U_1(\theta_1,q_1,\beta_1,f_1^{\alpha})= q_1 \cdot (a-\theta_{1} \cdot q_1- \theta_{1} \cdot \E (q_2 \ | \beta_1,f_1^{\alpha})-c)$$

where $ \E (q_2 \ | \beta_1,f_1^{\alpha})= \alpha \cdot  \int_{\underline{\phi}}^{\frac{\Bar{\phi}+\underline{\phi}}{2}} \beta_{12}(\theta_{2})  \, d\theta_{2} +  ( \frac{2}{\Bar{\phi}-\underline{\phi}}- \alpha)  \int_{\frac{\Bar{\phi}+\underline{\phi}}{2}}^{\Bar{\phi}} \beta_{12}(\theta_{2})  \, d\theta_{2} $. Hence, for any $(\theta_1,\beta_1,f_1^\alpha)$, there exists a unique optimal quantity $q_1= \frac{a-c}{2\theta_{1}} - \frac{1}{2} \cdot \E(q_2 \ | \beta_1,f_1^{\alpha}$ and Assumption 2 holds. 
\newline
\newline
To show that Asssumption 3 holds, let $q_1$ be optimal for $(\theta_1, \beta_1,f_1^\alpha)$ and let $q_1' > q_1$ be optimal for $(\theta_1,\beta_1',f_1^{\alpha'})$. Now consider some $q_1 < q_1'' < q_1'$. Consider the function $g(\lambda)=\frac{a-c}{2\theta_{1}} - \frac{1}{2} \cdot \E(p_2 \ | \lambda \cdot \beta_1 + (1-\lambda) \cdot \beta_1', \lambda \cdot f_1^\alpha + (1-\lambda) \cdot f_1^{\alpha'})$. As $g$ is continuous, by the intermediate value theorem, as $g(0) = q_1'$ and $g(1)= q_1,$ there exists a $\lambda \in (0,1)$ such that $g(\lambda)= q_1''$. Hence, $q_1''$ is optimal for $(\theta_1,\lambda \cdot \beta_1 + (1-\lambda) \cdot \beta_1', \lambda \cdot f_1^\alpha + (1-\lambda) \cdot f_1^{\alpha'}) $
\newline
\newline
Finally, Assumption 4 holds as well as $f_i^{0} \geq f_i^{\alpha} \ \forall \alpha \in [0,\frac{2}{\Bar{\phi}-\underline{\phi}}]$ and $f_i^{\frac{2}{\Bar{\phi}-\underline{\phi}}} \leq f_i^{\alpha} \ \forall \alpha \in [0,\frac{2}{\Bar{\phi}-\underline{\phi}}]$. 
\newline
\newline
We will show that if $k$ is even, then 

$$P_1^k(\theta_1)=[\frac{a-c}{2\theta_{1}} - \frac{1}{3}(a-c) \frac{1}{\Bar{\phi}-\underline{\phi}}(2\ln(\frac{\Bar{\phi}+\underline{\phi}}{2\underline{\phi}})(1-(\frac{1}{4})^{\frac{k}{2}}) - \ln (\frac{2 \Bar{\phi}}{\Bar{\phi} +\underline{\phi}})(1-(\frac{1}{4})^{\frac{k}{2}-1})  )  ,$$ $$\frac{a-c}{2\theta_{1}} - \frac{1}{3}(a-c) \frac{1}{\Bar{\phi}-\underline{\phi}}(2\ln (\frac{2 \Bar{\phi}}{\Bar{\phi} +\underline{\phi}})(1-(\frac{1}{4})^{\frac{k}{2}}) - \ln(\frac{\Bar{\phi}+\underline{\phi}}{2\underline{\phi}})(1-(\frac{1}{4})^{\frac{k}{2}-1})  )  + \frac{\Bar{q}}{2^k} ]$$

Similarly if $k$ is odd

$$P_1^k(\theta_1)=[\frac{a-c}{2\theta_{1}} - \frac{1}{3}(a-c) \frac{1}{\Bar{\phi}-\underline{\phi}}(2\ln(\frac{\Bar{\phi}+\underline{\phi}}{2\underline{\phi}})- \ln (\frac{2 \Bar{\phi}}{\Bar{\phi} +\underline{\phi}}))(1-(\frac{1}{4})^{\frac{k-1}{2}}) - \frac{\Bar{q}}{2^k},$$ $$\frac{a-c}{2\theta_{1}} - \frac{1}{3}(a-c) \frac{1}{\Bar{\phi}-\underline{\phi}}(2\ln (\frac{2 \Bar{\phi}}{\Bar{\phi} +\underline{\phi}})- \ln(\frac{\Bar{\phi}+\underline{\phi}}{2\underline{\phi}}))(1-(\frac{1}{4})^{\frac{k-1}{2}}) ]$$

Consider round 1 of the iterative procedure. Similarly to the Bertrand example, consider any parameter belief and the lowest possible choice belief $\beta_1^*$, where $\beta_{12}^*(\theta_2)= 0 \ \forall \theta_2 \in \Theta_2$. This leads to a corresponding optimal quantity of $q_1= \frac{a-c}{2\theta_{1}}$ and considering the highest possible choice belief $\beta_1^{**}$, where $\beta_{12}^{**}(\theta_2)=\Bar{q} \ \forall \theta_2 \in \Theta_2$ leads to an optimal quantity of $q_1= \frac{a-c}{2\theta_1}-\frac{\Bar{q}}{2}$. Hence, we have 

$$P_1^1(\theta_1)=[\frac{a-c}{2\theta_{1}} - \frac{\Bar{q}}{2},\frac{a-c}{2\theta_{1}}].$$
Now consider round 2 of the iterative procedure. We consider the lowest choice belief $\beta_1^*$, where $\beta_{12}^*(\theta_2)=\frac{a-c}{2\theta_{2}} - \frac{\Bar{q}}{2} \ \forall \theta_2 \in \Theta_2$. Note that the greater the value of $\theta_2$, the lower $\beta_{12}^*(\theta_2)$ is, which is why the highest parameter belief $f_1^0$ is considered. The corresponding optimal quantity of firm 1 is then $q_1= \frac{a-c}{2\theta_{1}} - \frac{1}{2} \cdot \E(q_2 \ |\beta_1^*,f_1^0)$, where $\E(q_2 \ |\beta_1^*,f_1^0)= \frac{2}{\Bar{\phi}- \underline{\phi}}\int_{\frac{\Bar{\phi}+\underline{\phi}}{2}}^{\Bar{\phi}} ( \frac{a-c}{2\theta_{2}} - \frac{\Bar{q}}{2})  \, d\theta_{2}= \frac{1}{\Bar{\phi}-\underline{\phi}} \cdot \ln (\frac{2 \Bar{\phi}}{\Bar{\phi} + \underline{\phi}}) \cdot (a-c) - \frac{\Bar{q}}{2}$. Hence $q_1= \frac{a-c}{2\theta_{1}} - \frac{1}{\Bar{\phi}-\underline{\phi}} \ln (\frac{2 \Bar{\phi}}{\Bar{\phi} +\underline{\phi}}) \cdot (\frac{a-c}{2} ) + \frac{\Bar{q}}{4}  $. Similarly, considering the lowest possible parameter belief $f_1^{\frac{2}{\Bar{\phi}-\underline{\phi}}}$ and highest choice belief $\beta_1^{**}$, where $\beta_{12}^{**}(\theta_2)=\frac{a-c}{2\theta_{2}} \ \forall \theta_2 \in \Theta_2$ leads to an optimal quantity of $q_1=  \frac{a-c}{2\theta_{1}} - \frac{1}{\Bar{\phi}-\underline{\phi}} \ln (\frac{\Bar{\phi}+\underline{\phi}}{2 \underline{\phi}}) \cdot (\frac{a-c}{2} ) $. Hence, we have 

$$P_1^2(\theta_1)=[\frac{a-c}{2\theta_{1}} - \frac{1}{\Bar{\phi}-\underline{\phi}} \ln (\frac{\Bar{\phi}+\underline{\phi}}{2 \underline{\phi}}) \cdot (\frac{a-c}{2} ), \frac{a-c}{2\theta_{1}} - \frac{1}{\Bar{\phi}-\underline{\phi}} \ln (\frac{2 \Bar{\phi}}{\Bar{\phi} +\underline{\phi}}) \cdot (\frac{a-c}{2} ) + \frac{\Bar{q}}{4}]$$

Let $k \geq 2$ and let the statement be true for the first $k-1$ rounds, where $k-1$ is odd. Then 

$$P_2^{k-1}(\theta_2)=[\frac{a-c}{2\theta_{2}} - \frac{1}{3}(a-c) \frac{1}{\Bar{\phi}-\underline{\phi}}(2\ln(\frac{\Bar{\phi}+\underline{\phi}}{2\underline{\phi}})- \ln (\frac{2 \Bar{\phi}}{\Bar{\phi} +\underline{\phi}}))(1-(\frac{1}{4})^{\frac{k}{2}-1}) - \frac{\Bar{q}}{2^{k-1}},$$ $$\frac{a-c}{2\theta_{2}} - \frac{1}{3}(a-c) \frac{1}{\Bar{\phi}-\underline{\phi}}(2\ln (\frac{2 \Bar{\phi}}{\Bar{\phi} +\underline{\phi}})- \ln(\frac{\Bar{\phi}+\underline{\phi}}{2\underline{\phi}}))(1-(\frac{1}{4})^{\frac{k}{2}-1}) ]$$

Consider round $k$ and firm 1. Consider the highest parameter belief  $f_1^0$ and the lowest choice belief $\beta_1^{**}$, where $\beta_{12}^{**}(\theta_2)= \frac{a-c}{2\theta_{2}} - \frac{1}{3}(a-c) \frac{1}{\Bar{\phi}-\underline{\phi}}(2\ln(\frac{\Bar{\phi}+\underline{\phi}}{2\underline{\phi}})- \ln (\frac{2 \Bar{\phi}}{\Bar{\phi} +\underline{\phi}}))(1-(\frac{1}{4})^{\frac{k}{2}-1}) - \frac{\Bar{q}}{2^{k-1}}  \ \forall \theta_2 \in \Theta_2$.

Then $\E(\beta_1^{**},f_1^0)= \frac{1}{\Bar{\phi}-\underline{\phi}} \cdot \ln (\frac{2 \Bar{\phi}}{\Bar{\phi}+ \underline{\phi}}) \cdot (a-c) - \frac{1}{3}(a-c) \frac{1}{\Bar{\phi}-\underline{\phi}}(2\ln(\frac{\Bar{\phi}+\underline{\phi}}{2\underline{\phi}})- \ln (\frac{2 \Bar{\phi}}{\Bar{\phi} +\underline{\phi}}))(1-(\frac{1}{4})^{\frac{k}{2}-1}) - \frac{\Bar{q}}{2^{k-1}} $. Hence, the optimal quantity is

$$q_1= \frac{a-c}{2\theta_1} - \frac{1}{2} \frac{1}{\Bar{\phi}-\underline{\phi}} \cdot \ln (\frac{2 \Bar{\phi}}{\Bar{\phi}+ \underline{\phi}}) \cdot (a-c) + \frac{1}{6}(a-c) \frac{1}{\Bar{\phi}-\underline{\phi}}(2\ln(\frac{\Bar{\phi}+\underline{\phi}}{2\underline{\phi}})- \ln (\frac{2 \Bar{\phi}}{\Bar{\phi} +\underline{\phi}}))(1-4(\frac{1}{4})^{\frac{k}{2}}) + \frac{\Bar{q}}{2^{k}}  $$.  

$$= \frac{a-c}{2\theta_{1}} - \frac{1}{3}(a-c) \frac{1}{\Bar{\phi}-\underline{\phi}}(2\ln (\frac{2 \Bar{\phi}}{\Bar{\phi} +\underline{\phi}})(1-(\frac{1}{4})^{\frac{k}{2}}) - \ln(\frac{\Bar{\phi}+\underline{\phi}}{2\underline{\phi}})(1-(\frac{1}{4})^{\frac{k}{2}-1})  )  + \frac{\Bar{q}}{2^k}$$
The proof for the lower bound is very similar. As a result we have that 

$$P_1^k(\theta_1)=[\frac{a-c}{2\theta_{1}} - \frac{1}{3}(a-c) \frac{1}{\Bar{\phi}-\underline{\phi}}(2\ln(\frac{\Bar{\phi}+\underline{\phi}}{2\underline{\phi}})(1-(\frac{1}{4})^{\frac{k}{2}}) - \ln (\frac{2 \Bar{\phi}}{\Bar{\phi} +\underline{\phi}})(1-(\frac{1}{4})^{\frac{k}{2}-1})  )  ,$$ $$\frac{a-c}{2\theta_{1}} - \frac{1}{3}(a-c) \frac{1}{\Bar{\phi}-\underline{\phi}}(2\ln (\frac{2 \Bar{\phi}}{\Bar{\phi} +\underline{\phi}})(1-(\frac{1}{4})^{\frac{k}{2}}) - \ln(\frac{\Bar{\phi}+\underline{\phi}}{2\underline{\phi}})(1-(\frac{1}{4})^{\frac{k}{2}-1})  )  + \frac{\Bar{q}}{2^k} ]$$

Now let the statement be true for the first $k-1$ rounds, where $k-1$ is even. Then 

$$P_2^{k-1}(\theta_2)=[\frac{a-c}{2\theta_{2}} - \frac{1}{3}(a-c) \frac{1}{\Bar{\phi}-\underline{\phi}}(2\ln(\frac{\Bar{\phi}+\underline{\phi}}{2\underline{\phi}})(1-(\frac{1}{4})^{\frac{k-1}{2}}) - \ln (\frac{2 \Bar{\phi}}{\Bar{\phi} +\underline{\phi}})(1-(\frac{1}{4})^{\frac{k-1}{2}-1}))  ,$$ $$\frac{a-c}{2\theta_{2}} - \frac{1}{3}(a-c) \frac{1}{\Bar{\phi}-\underline{\phi}}(2\ln (\frac{2 \Bar{\phi}}{\Bar{\phi} +\underline{\phi}})(1-(\frac{1}{4})^{\frac{k-1}{2}}) - \ln(\frac{\Bar{\phi}+\underline{\phi}}{2\underline{\phi}})(1-(\frac{1}{4})^{\frac{k-1}{2}-1})  )  + \frac{\Bar{q}}{2^{k-1}} ]$$

Let us now consider the lowest parameter belief $f_1^{\frac{2}{\Bar{\phi}-\underline{\phi}}}$ and the highest choice belief $\beta_1^{*}$, where $\beta_{12}^{*}(\theta_2)=\frac{a-c}{2\theta_{2}} - \frac{1}{3}(a-c) \frac{1}{\Bar{\phi}-\underline{\phi}}(2\ln (\frac{2 \Bar{\phi}}{\Bar{\phi} +\underline{\phi}})(1-(\frac{1}{4})^{\frac{k-1}{2}}) - \ln(\frac{\Bar{\phi}+\underline{\phi}}{2\underline{\phi}})(1-(\frac{1}{4})^{\frac{k-1}{2}-1})  )  + \frac{\Bar{q}}{2^{k-1}}  \ \forall \theta_2 \in \Theta_2$. Then $\E(\beta_1^{*},f_1(\phi))=  \frac{1}{\Bar{\phi}-\underline{\phi}} \cdot \ln (\frac{\Bar{\phi}+\underline{\phi}}{2\underline{\phi}}) \cdot (a-c) - \frac{1}{3}(a-c) \frac{1}{\Bar{\phi}-\underline{\phi}}(2\ln (\frac{2 \Bar{\phi}}{\Bar{\phi} +\underline{\phi}})(1-(\frac{1}{4})^{\frac{k-1}{2}}) - \ln(\frac{\Bar{\phi}+\underline{\phi}}{2\underline{\phi}})(1-(\frac{1}{4})^{\frac{k-1}{2}-1})  )  + \frac{\Bar{q}}{2^{k-1}} $. Hence,

$$q_1= \frac{a-c}{2\theta_1} - \frac{1}{2} \frac{1}{\Bar{\phi}-\underline{\phi}} \cdot \ln (\frac{\Bar{\phi}+\underline{\phi}}{2\underline{\phi}}) \cdot (a-c) +  \frac{1}{6}(a-c) \frac{1}{\Bar{\phi}-\underline{\phi}}(2\ln (\frac{2 \Bar{\phi}}{\Bar{\phi} +\underline{\phi}})(1-(\frac{1}{4})^{\frac{k-1}{2}}) - \ln(\frac{\Bar{\phi}+\underline{\phi}}{2\underline{\phi}})(1-(\frac{1}{4})^{\frac{k-1}{2}-1})  ) - \frac{\Bar{q}}{2^{k}} $$

$$= \frac{a-c}{2\theta_{1}} - \frac{1}{3}(a-c) \frac{1}{\Bar{\phi}-\underline{\phi}}(2\ln(\frac{\Bar{\phi}+\underline{\phi}}{2\underline{\phi}})- \ln (\frac{2 \Bar{\phi}}{\Bar{\phi} +\underline{\phi}}))(1-(\frac{1}{4})^{\frac{k-1}{2}}) - \frac{\Bar{q}}{2^k} $$

The proof for the lower bound is very similar. As a result we have that 

$$P_1^k(\theta_1)=[\frac{a-c}{2\theta_{1}} - \frac{1}{3}(a-c) \frac{1}{\Bar{\phi}-\underline{\phi}}(2\ln(\frac{\Bar{\phi}+\underline{\phi}}{2\underline{\phi}})- \ln (\frac{2 \Bar{\phi}}{\Bar{\phi} +\underline{\phi}}))(1-(\frac{1}{4})^{\frac{k-1}{2}}) - \frac{\Bar{q}}{2^k},$$ $$\frac{a-c}{2\theta_{1}} - \frac{1}{3}(a-c) \frac{1}{\Bar{\phi}-\underline{\phi}}(2\ln (\frac{2 \Bar{\phi}}{\Bar{\phi} +\underline{\phi}})- \ln(\frac{\Bar{\phi}+\underline{\phi}}{2\underline{\phi}}))(1-(\frac{1}{4})^{\frac{k-1}{2}}) ]$$

The set of point rationalizable choices of agent 1 is thus

$$P_1(\theta_1)=[\frac{a-c}{2\theta_{1}} - \frac{1}{3}(a-c) \frac{1}{\Bar{\phi}-\underline{\phi}}(2\ln(\frac{\Bar{\phi}+\underline{\phi}}{2\underline{\phi}}) - \ln (\frac{2 \Bar{\phi}}{\Bar{\phi} +\underline{\phi}}))  ,\frac{a-c}{2\theta_{1}} - \frac{1}{3}(a-c) \frac{1}{\Bar{\phi}-\underline{\phi}}(2\ln (\frac{2 \Bar{\phi}}{\Bar{\phi} +\underline{\phi}})- \ln(\frac{\Bar{\phi}+\underline{\phi}}{2\underline{\phi}})) ]$$

The upper and lower bound of the set of point rationalizable choices of each firm is increasing in $a$, and decreasing in $\theta_i$ and $c$, which is also true because of Theorem 1. An increase in $\theta_i$ results in a greater price elasticity of demand, which makes producing less more favorable. Note that in both examples, if both players would have only one parameter belief, then the set of point rationalizable choices of a player would consist of a single choice.

\section{Discussion and Conclusion}
A limitation in this paper is the one dimensionality of the choice set and parameter set of the players. In future work, the model could be extended to a multidimensional choice set and parameter set. By limiting the analysis to a one-dimensional setting, we can keep the model and analysis simple and easy to understand. Introducing additional dimensions to the choice set and parameter set could significantly increase the complexity of the model and analysis, making it harder to derive clear and intuitive results.
\newline
\newline
Finally, it is worth noting that the assumption of a third-order partial derivative of zero with respect to his choice and any two choices of his opponents may rule out some interesting scenarios, such as public good environments. Although this assumption is sufficient to ensure weakly increasing differences for the expected utility function, it may be refined to a condition that is necessary for weakly increasing differences. However, it is important to highlight that this assumption is often satisfied in many industrial organization models, such as the Cournot and Bertrand models. Nevertheless, further research could explore the relaxation or generalization of this assumption to accommodate a wider range of strategic environments.
\newline
\newline

\newpage
\section{Appendix}
We need the following Lemma to prove Lemma \ref{increasingdifflemma}. It is a generalization of \citeA{mas1995microeconomic} from one-dimensional variables to multi-dimensional variables.
    
\begin{lemma}\label{mass colel}
 Let $X_j=[\underline{x}_j,\Bar{x}_j]$ for every $j \in \{1,...,n\}$. Let $u= X_1 \times...\times X_n \to \mathbb{R}$ where $\frac{\partial u(x_1,...,x_n) }{\partial x_j} \geq 0$ for all $j \in \{1,..,n\}$ and $\frac{\partial^2 u(x_1,...,x_n) }{\partial x_j \partial x_k}=0$ for all $j\neq k \in \{1,...,n\}$. For every $j \in \{1,...,n\}$, let $f_j,g_j$ be probability density functions on $X_j$ and $F_j$,$G_j$ be the cumulative distribution functions belonging to $f_j,g_j$, where $F_j$ first-order stochastically dominates $G_j$. Then 

$$\int_{\underline{x}_1}^{\Bar{x}_1} \dots \int_{\underline{x}_n}^{\Bar{x}_n}u (x_1,...,x_{n}) \cdot  \prod_{j=1 }^n f_j(x_j) \,dx_{n} \dots dx_1$$ $$ \geq \int_{\underline{x}_1}^{\Bar{x}_1} \dots \int_{\underline{x}_n}^{\Bar{x}_n}u (x_1,...,x_{n}) \cdot  \prod_{j=1 }^n g_j(x_j)  \,dx_{n} \dots dx_1$$
\end{lemma}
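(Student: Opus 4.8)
The structural fact I would exploit first is that the hypothesis $\partial^2 u/(\partial x_j\partial x_k)=0$ for all $j\neq k$ forces $u$ to be \emph{additively separable}. Since $\partial u/\partial x_j$ is independent of every $x_k$ with $k\neq j$, there are functions $\phi_j$ with $\partial u/\partial x_j=\phi_j(x_j)$, and $\phi_j\geq 0$ by assumption. Fixing a base point $x^0\in X_1\times\cdots\times X_n$ and integrating along axis-parallel segments (applying the fundamental theorem of calculus one coordinate at a time) gives $u(x_1,\dots,x_n)=u(x^0)+\sum_{j=1}^n\int_{x_j^0}^{x_j}\phi_j(t)\,dt=:c+\sum_{j=1}^n u_j(x_j)$, where $c$ is a constant and each $u_j$ is $C^1$ and nondecreasing because $u_j'=\phi_j\geq 0$.

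Given this representation, the two $n$-fold integrals collapse to sums of one-dimensional integrals. Because $\prod_{j=1}^n f_j(x_j)$ is a product density and $u_j$ depends only on $x_j$, integrating out all coordinates other than $x_j$ (each contributing a factor $\int_{X_k}f_k=1$) yields $\int_{X_1}\cdots\int_{X_n}u(x)\prod_{j=1}^n f_j(x_j)\,dx=c+\sum_{j=1}^n\int_{\underline{x}_j}^{\Bar{x}_j}u_j(x_j)f_j(x_j)\,dx_j$, and the same identity holds with every $f_j$ replaced by $g_j$. So the claim reduces to the one-dimensional statement: for each $j$, $\int_{\underline{x}_j}^{\Bar{x}_j}u_j f_j\geq\int_{\underline{x}_j}^{\Bar{x}_j}u_j g_j$ whenever $u_j$ is nondecreasing and $F_j$ first-order stochastically dominates $G_j$.

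This one-dimensional step is exactly Mas-Colell's observation, and I would prove it by integration by parts: $\int_{\underline{x}_j}^{\Bar{x}_j}u_j(t)(f_j(t)-g_j(t))\,dt=\big[u_j(t)(F_j(t)-G_j(t))\big]_{\underline{x}_j}^{\Bar{x}_j}-\int_{\underline{x}_j}^{\Bar{x}_j}u_j'(t)(F_j(t)-G_j(t))\,dt$. The boundary term vanishes since $F_j-G_j$ equals $0$ at both endpoints of $X_j$ (the $f_j,g_j$ being densities); first-order stochastic dominance means $F_j(t)\leq G_j(t)$ for all $t$, and $u_j'=\phi_j\geq 0$, so the remaining integrand is $\leq 0$ and the whole expression is $\geq 0$. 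Summing over $j$ then gives the lemma.

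The step I expect to require the most care is the passage from the vanishing cross-partials to the explicit separable form, together with the regularity needed to justify the integration by parts (absolute continuity of $F_j$, which is automatic here). An alternative that sidesteps writing down the separable representation is an induction on $n$: replace $g_j$ by $f_j$ one coordinate at a time, and at the step that changes coordinate $k$, condition via Fubini on the remaining coordinates and apply the one-dimensional result to $t\mapsto u(\dots,t,\dots)$, which is nondecreasing because $\partial u/\partial x_k\geq 0$, the nonnegativity of the remaining product of densities carrying the inequality through the outer integral.
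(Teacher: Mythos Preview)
Your proof is correct, and your main route is genuinely different from the paper's. The paper proceeds by induction on $n$: after establishing the one-dimensional base case by integration by parts, it integrates by parts in the $n$-th variable to split the difference of the two $n$-fold integrals into (i) an $(n{-}1)$-dimensional term handled by the induction hypothesis and (ii) a residual term involving $\partial u/\partial x_n$, which it then reduces by repeated integration by parts in the remaining coordinates, each step producing a term with a mixed partial $\partial^2 u/(\partial x_n\partial x_j)$ that vanishes by hypothesis, until only a single one-dimensional integral remains. Your argument instead uses the cross-partial condition once and structurally: it forces additive separability $u=c+\sum_j u_j(x_j)$, after which the product densities factor through and the claim is literally a sum of one-dimensional Mas-Colell inequalities. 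This is shorter and more transparent than the paper's iterated integration by parts, and it makes clear exactly where the zero cross-partials enter.

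Your closing alternative (swap $g_j$ to $f_j$ one coordinate at a time and apply the one-dimensional result to $t\mapsto u(\dots,t,\dots)$, then integrate against the nonnegative remaining density) is also valid, and it is worth noting that this telescoping argument does \emph{not} use the cross-partial hypothesis at all; monotonicity of $u$ in each coordinate together with coordinatewise first-order stochastic dominance already suffices. So that route in fact proves a stronger statement than either your separability argument or the paper's proof.
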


\begin{proof}
We will prove by induction on $n$. Consider the base case with only one variable $x_1$.  We can now apply the proof of proposition 6.1.B of \cite{mas1995microeconomic}. We will repeat their proof here:  

$$\int_{\underline{x}_1}^{\Bar{x}_1}f_1(x_1) \cdot u(x_1) \, d x_1 -\int_{\underline{x}_1}^{\Bar{x}_1} g_1(x_1) \cdot u(x_1) \, d x_1 = $$

$$\int_{\underline{x}_1}^{\Bar{x}_1} u(x_1) \cdot (f_1(x_1)-g_1(x_1)) \, d x_1 .$$

    Using integration by parts, this integral is equal to

    $$\Big[u(x_1) \cdot (F_1(x_1)-G_1(x_1))\Big]_{\underline{x}_{1}}^{\Bar{x}_{1}}-\int_{\underline{x}_1}^{\Bar{x}_1} \frac{\partial u(x_1)}{\partial x_1}  \cdot (F_1(x_1)-G_1(x_1)) \, d x_1 =$$

    $$u(\Bar{x}_1) \cdot (1-1) - u(\underline{x}_1) \cdot (0-0) -\int_{\underline{x}_1}^{\Bar{x}_1} \frac{\partial u(x_1)}{\partial x_1}  \cdot (F_1(x_1)-G_1(x_1)) \, d x_1 \geq 0.$$

Note that the first and second term are zero. For the third term, we have $ \frac{\partial u(x_1)}{\partial x_1} \geq 0 $ and we have $F_1(x_1)-G_1(x_1) \leq 0$ for all $x_1$, since $F_1$ first order stochastically dominates $G_1$ .
\newline
\newline
Now consider the case when there are $n \geq 2$ variables and assume that for $n-1$ variables that 

$$\int_{\underline{x}_1}^{\Bar{x}_1} \dots \int_{\underline{x}_{n-1}}^{\Bar{x}_{n-1}}u (x_1,...,x_{n}) \cdot ( \prod_{j=1 }^{n-1}f_j(x_j)- \prod_{j=1 }^{n-1} g_j(x_j) ) \,dx_{n-1} \dots dx_1 \geq 0$$

 for every $x_n$. We should show that 

$$\int_{\underline{x}_1}^{\Bar{x}_1} \dots \int_{\underline{x}_n}^{\Bar{x}_n}u (x_1,...,x_{n}) \cdot (  \prod_{j=1 }^n f_j(x_j)-  \prod_{j=1 }^n g_j(x_j) ) \,dx_{n} \dots dx_1 \geq 0$$

Using integration by parts on the inner integral, this integral is equal to

$$ \int_{\underline{x}_1}^{\Bar{x}_1} \dots\int_{\underline{x}_{n-1}}^{\Bar{x}_{n-1}} \Big[u(x_1,...,x_n)\cdot (F_n(x_n) \cdot \prod_{j=1 }^{n-1} f_j(x_j) - G_n(x_n) \cdot \prod_{j=1 }^{n-1} g_j(x_j))\Big]_{\underline{x}_{n}}^{\Bar{x}_{n}}  \,dx_{n-1} \dots dx_1 $$ 
$$- \int_{\underline{x}_n}^{\Bar{x}_n} \int_{\underline{x}_1}^{\Bar{x}_1} \dots\int_{\underline{x}_{n-1}}^{\Bar{x}_{n-1}} \frac{\partial u (x_1,...,x_{n})}{\partial x_n}  \cdot  (F_n(x_n) \cdot \prod_{j=1}^{n-1}f_j(x_j) - G_n(x_n) \cdot \prod_{j=1}^{n-1} g_j(x_j))   \,dx_{n-1}  \dots dx_1 dx_n $$
As $F(\Bar{x}_n)=G(\Bar{x}_n)=1$ and $F(\underline{x}_n)=G(\underline{x}_n)=0$, this can be rewritten to 
$$\int_{\underline{x}_1}^{\Bar{x}_1} \dots \int_{\underline{x}_{n-1}}^{\Bar{x}_{n-1}}u (x_1,...,\Bar{x}_n) \cdot ( \prod_{j=1}^{n-1} f_j(x_j)- \prod_{j=1}^{n-1} g_j(x_j) ) \,dx_{n-1} \dots dx_1 $$ 
$$- \int_{\underline{x}_n}^{\Bar{x}_n} \int_{\underline{x}_1}^{\Bar{x}_1} \dots\int_{\underline{x}_{n-1}}^{\Bar{x}_{n-1}} \frac{\partial u (x_1,...,x_{n})}{\partial x_n}  \cdot  (F_n(x_n) \cdot\prod_{j=1}^{n-1}f_j(x_j) - G_n(x_n) \cdot \prod_{j=1}^{n-1} g_j(x_j))   \,dx_{n-1}  \dots dx_1 dx_n $$

Note that by the assumption for $n-1$ variables, the first term is positive. Now we will show that 

 $$ \int_{\underline{x}_n}^{\Bar{x}_n} \int_{\underline{x}_1}^{\Bar{x}_1} \dots\int_{\underline{x}_{n-1}}^{\Bar{x}_{n-1}} \frac{\partial u (x_1,...,x_{n})}{\partial x_n}  \cdot  (F_n(x_n) \cdot \prod_{j=1}^{n-1} f_j(x_j) - G_n(x_n) \cdot \prod_{j=1}^{n-1} g_j(x_j))   \,dx_{n-1}  \dots dx_1 dx_n \leq 0$$

Using integration by parts, this integral is equal to

 $$\int_{\underline{x}_n}^{\Bar{x}_n} \int_{\underline{x}_1}^{\Bar{x}_1} \dots \int_{\underline{x}_{n-2}}^{\Bar{x}_{n-2}} \Big[(\frac{\partial u (x_1,...,x_{n})}{\partial x_n } \cdot (F_n(x_n) \cdot F_{n-1}(x_{n-1}) \prod_{j=1}^{n-2}f_j(x_j)$$
 
 $$- G_n(x_n) \cdot  G_{n-1}(x_{n-1}) \cdot \prod_{j=1}^{n-2} g_j(x_j))\Big]_{\underline{x}_{n-1}}^{\Bar{x}_{n-1}}  \,dx_{n-2} \dots dx_1 dx_n $$ 
    
    $$- \int_{\underline{x}_{n-1}}^{\Bar{x}_{n-1}} \int_{\underline{x}_n}^{\Bar{x}_n} \int_{\underline{x}_1}^{\Bar{x}_1} \dots \int_{\underline{x}_{n-2}}^{\Bar{x}_{n-2}} \frac{\partial u (x_1,...,x_{n})}{\partial x_n \partial x_{n-1}}  \cdot $$
    
    $$(F_n(x_n) \cdot F_{n-1}(x_{n-1}) \cdot\prod_{j=1}^{n-2} f_j(x_j) - G_n(x_n) \cdot  G_{n-1}(x_{n-1}) \cdot \prod_{j=1}^{n-2} g_j(x_j))   \,dx_{n-2}  \dots dx_1 dx_n dx_{n-1} =$$

    $$\int_{\underline{x}_n}^{\Bar{x}_n} \int_{\underline{x}_1}^{\Bar{x}_1} \dots \int_{\underline{x}_{n-2}}^{\Bar{x}_{n-2}} \frac{\partial u (x_1,...,\Bar{x}_{n-1},x_{n})}{\partial x_n } \cdot (F_n(x_n) \cdot \prod_{j=1}^{n-2}f_j(x_j)- G_n(x_n) \cdot\prod_{j=1}^{n-2} g_j(x_j) ) \,dx_{n-2} \dots dx_1 dx_n $$ 
    
    $$- \int_{\underline{x}_{n-1}}^{\Bar{x}_{n-1}}  \int_{\underline{x}_n}^{\Bar{x}_n} \int_{\underline{x}_1}^{\Bar{x}_1} \dots\int_{\underline{x}_{n-2}}^{\Bar{x}_{n-2}} \frac{\partial^2u (x_1,...,x_{n})}{\partial x_n \partial x_{n-1}}  \cdot  (F_n(x_n) \cdot F_{n-1}(x_{n-1}) \cdot \prod_{j=1}^{n-2} f_j(x_j)$$
    
    $$- G_n(x_n) \cdot G_{n-1}(x_{n-1})\prod_{j=1}^{n-2} g_j(x_j))   \,dx_{n-2}  \dots dx_1 dx_n dx_{n-1} $$

     We have $ \frac{\partial^2u (x_1,...,x_{n})}{\partial x_n \partial x_{n-1}}= 0 $. Hence, the second term is zero. 
     \newline
     \newline
     By applying integration by parts $n-2$ more times in this fashion, the second term will vanish in each iteration as  $ \frac{\partial^2u (x_1,...,x_{n})}{\partial x_n \partial x_{j}} = 0 $ for all $j \neq n$. Hence, after applying integration by parts $n-2$ times, we are left with

     $$ \int_{\underline{x}_n}^{\Bar{x}_n} \frac{\partial u (\Bar{x}_1,...,\Bar{x}_{n-1},x_{n})}{\partial x_n } \cdot (F_n(x_n)- G_n(x_n) ) \,dx_{n} \leq 0$$

     as $ \frac{\partial u (x_1,...,x_{n})}{\partial x_n} \geq 0 $ and  $F_n(x_n)-G_n(x_n) \leq 0$ for all $x_n$.

     Hence, we have shown that 
     
     $$\int_{\underline{x}_1}^{\Bar{x}_1} \dots \int_{\underline{x}_n}^{\Bar{x}_n}u (x_1,...,x_{n}) \cdot ( \prod_{j=1}^n f_j(x_j)- \prod_{j=1}^n g_j(x_j) ) \,dx_{n} \dots dx_1 \geq 0$$

     which completes the proof.
\end{proof}

\subsection*{Proof of Lemma 1}

\begin{proof}
We prove the part on weakly increasing differences. The part on weakly decreasing differences is similar. As $\frac{\partial^2u_i}{\partial c_i \partial \theta_i} \geq 0$ and $\frac{\partial^2u_i}{\partial c_i \partial c_j} \geq 0 \ \forall j \neq i$ , we have that 

$$\frac{\partial u_i(\theta_i',c_i,c_{-i})}{\partial c_i} \geq \frac{\partial u_i(\theta_i,c_i,c_{-i})}{\partial c_i} \ \text{for every} \ \theta_i' \geq \theta_i, c_i \in C_i \ \text{and} \ c_{-i} \in C_{-i}$$ 

$$\text{and} \ \frac{\partial u_i(\theta_i,c_i,c_{-i}')}{\partial c_i} \geq\frac{\partial u_i(\theta_i,c_i,c_{-i})}{\partial c_i}  \ \text{for every} \ c_{-i}'\geq c_{-i}, \theta_i \in \Theta_i \ \text{and} \ c_i \in C_i.$$

Hence, we have that $\ \forall c_i' \geq c_i$, $\theta_i' \geq \theta_i $ and $c_{-i}' \geq c_{-i}$  

$$ u_i(\theta_i',c_i',c_{-i}')-u_i(\theta_i',c_i,c_{-i}') \geq u_i(\theta_i,c_i',c_{-i}') - u_i(\theta_i,c_i,c_{-i}')$$
and 
$$ u_i(\theta_i,c_i',c_{-i}')-u_i(\theta_i,c_i,c_{-i}') \geq u_i(\theta_i,c_i',c_{-i}) - u_i(\theta_i,c_i,c_{-i}),$$

which implies that

$$ u_i(\theta_i',c_i',c_{-i}')-u_i(\theta_i',c_i,c_{-i}') \geq u_i(\theta_i,c_i',c_{-i}) - u_i(\theta_i,c_i,c_{-i}).$$
  
Next, let $ (\theta_i',\beta_i', f_i') \geq (\theta_i,\beta_i, f_i)$. Then, we have that

$$U_i(\theta_i',c_i',\beta_i',f_i')-U_i(\theta_i',c_i,\beta_i',f_i')=$$

$$  \int_{c_{-i} \in C_{-i}} (\beta_i' \circ f_i')(c_{-i}) \cdot  u_i(\theta_i', c_i',c_{-i}) \, d c_{-i}$$

$$ -  \int_{c_{-i} \in C_{-i}} (\beta_i' \circ f_i')(c_{-i}) \cdot  u_i(\theta_i', c_i,c_{-i}) \, d c_{-i}=$$

$$\int_{c_{-i} \in C_{-i}} (\beta_i' \circ f_i')(c_{-i}) \cdot (u_i(\theta_i', c_i',c_{-i})-u_i(\theta_i', c_i,c_{-i})) \, d c_{-i}.$$

Similarly, we have that

$$U_i(\theta_i,c_i',\beta_i,f_i)-U_i(\theta_i,c_i,\beta_i,f_i)=$$

$$\int_{c_{-i} \in C_{-i}} (\beta_i \circ f_i)(c_{-i}) \cdot (u_i(\theta_i, c_i',c_{-i})-u_i(\theta_i, c_i,c_{-i})) \, d c_{-i}.$$

Hence, we have  weakly increasing differences if

$$\int_{c_{-i} \in C_{-i}}(\beta_i' \circ f_i')(c_{-i}) \cdot (u_i(\theta_i', c_i',c_{-i})-u_i(\theta_i', c_i,c_{-i})) \, d c_{-i} \geq$$ $$\int_{c_{-i} \in C_{-i}}(\beta_i \circ f_i)(c_{-i}) \cdot (u_i(\theta_i, c_i',c_{-i})-u_i(\theta_i, c_i,c_{-i})) \, d c_{-i}  $$

Let $\Delta u_i (\theta_i',c_{-i})=u_i(\theta_i', c_i'c_{-i})-u_i(\theta_i', c_i,c_{-i})$ and $\Delta u_i(\theta_i,c_{-i})=u_i(\theta_i, c_i',c_{-i})-u_i(\theta_i, c_i,c_{-i})$, which are both increasing in $c_j \ \forall j \neq i$ and $\Delta u_i (\theta_i',c_{-i}) \geq \Delta u_i(\theta_i,c_{-i})$ for all $c_{-i} \in C_{-i}$.

Without loss of generality, consider player 1. To show increasing differences we must show that 

$$\int_{\underline{c}_2}^{\Bar{c}_2} \dots \int_{\underline{c}_{n}}^{\Bar{c}_{n}} \Delta u_1 (\theta_1,c_{-1}) \cdot  \prod_{j\neq 1}(\beta_1' \circ f_1')(c_j) \,dc_{n} \dots dc_{2} \geq \int_{\underline{c}_2}^{\Bar{c}_2} \dots \int_{\underline{c}_{n}}^{\Bar{c}_{n}} \Delta u_1 (\theta_1,c_{-1}) \cdot \prod_{j\neq 1}(\beta_1 \circ f_1)(c_j) ) \,dc_{n} \dots dc_{2}$$

As $\frac{\partial^3u_1}{\partial c_1 \partial c_j \partial c_{l}} = 0 $ for all $j \neq l$ and $j,l \neq 1$, we have that $ \frac{\partial^2 \Delta u_1 (\theta_1,c_{-1})}{\partial c_j \partial c_{l}} =\frac{\partial^2 (u_1(\theta_1, c_1',c_{-1})-u_1(\theta_1, c_1,c_{-1}))}{\partial c_j \partial c_{l}} = 0 $. Similarly, as $\frac{\partial^2u_1}{\partial c_1 \partial c_j} \geq 0 $ for all $j \neq 1$, we have that $ \frac{\partial \Delta u_1 (\theta_1,c_{-1})}{\partial c_{j}} =\frac{\partial (u_1(\theta_1, c_1',c_{-1})-u_1(\theta_1, c_1,c_{-1}))}{\partial c_{j}} \geq 0 $ for all $j \neq 1$. Lastly, as $(\beta_1' \circ F_1')\geq (\beta_1 \circ F_1)$, Lemma \ref{mass colel} implies that the above inequality is true.
\end{proof}

\subsection*{Proof of Lemma \ref{monotonicity}}

\begin{proof}
If $c_i=c_i'$, then the statement holds. Next, let $c_i \neq c_i'$ The choice  $c_i$ is optimal for $(\theta_i,\beta_i,f_i)$ and  $c_i'$ is optimal for $(\theta_i',\beta_i',f_i')$, which implies that 

 \begin{equation*}
    U_i(\theta_i,c_i,\beta_i,f_i) - U_i(\theta_i,c_i',\beta_i,f_i) > 0  
 \end{equation*}
 and
  \begin{equation*}
   U_i(\theta_i',c_i,\beta_i',f_i')- U_i(\theta_i',c_i',\beta_i',f_i') < 0 . 
 \end{equation*}
 By contradiction, assume that $c_i' < c_i$. For $U_i$ to have weakly increasing differences, it must then be true that 
 
 $$U_i(\theta_i',c_i,\beta_i',f_i')-U_i(\theta_i',c_i',\beta_i',f_i') \geq U_i(\theta_i,c_i,\beta_i,f_i) -  U_i(\theta_i,c_i',\beta_i,f_i) $$
 
However, our first two inequalities imply that 
 \begin{equation*}
  U_i(\theta_i',c_i,\beta_i',f_i')-U_i(\theta_i',c_i',\beta_i',f_i') <  U_i(\theta_i,c_i,\beta_i,f_i) -  U_i(\theta_i,c_i',\beta_i,f_i), 
 \end{equation*}
 
which is a contradiction. Hence, it must be that $c_i'\geq c_i$.
\end{proof}

\subsection*{Proof of Theorem 1}
\begin{proof}
We will start by proving that $P_i^k(\theta_i)=[l_i^k(\theta_i),u_i^k(\theta_i)]$ for all $i \in N$. We will prove by induction. The statement is true for round $k=0$ as $P_i^0(\theta_i)=[\underline{c}_i,\Bar{c}_i] \ \forall i \in \{1,...,N\}$ and $\theta_i \in \Theta_i$. Now assume that the statement holds for the first $k-1$ rounds. If $P_i^k(\theta_i)$ is a singleton, then the statement is true for round $k$. Hence, assume that $P_i^k(\theta_i)$ is not a singleton. Consider $c_i',c_i'' \in P_i^k(\theta_i)$ with $c_i'' < c_i'$. Now consider some choice $c_i'' < c_i < c_i'$.
\newline
\newline
The choice $c_i''$ is optimal for some choice belief $\beta_i''$ such that $\beta_{ij}''(\theta_{j}) \in P_{j}^{k-1}(\theta_{j})=[l_j^{k-1}(\theta_j),u_j^{k-1}(\theta_j)]$ \newline $\forall \theta_j \in \Theta_j \ \text{and} \ \forall j \in  N \setminus \{i\}$ and parameter belief $f_i'' \in M$. The choice $c_i'$ is optimal for some belief $\beta_i'$ such that $\beta_{ij}'(\theta_{j}) \in P_{j}^{k-1}(\theta_{j})$ $=[l_j^{k-1}(\theta_j),u_j^{k-1}(\theta_j)]  \ \forall \theta_j \in \Theta_j \ \text{and} \ \forall j \in  N \setminus \{i\}$ and parameter belief $f_i' \in M$. Now consider a belief $\beta_i$ such that $\beta_{ij}(\theta_j)=(1-\lambda) \cdot \beta_{ij}''(\theta_j) + \lambda \cdot \beta_{ij}'(\theta_{j}) \ \forall \theta_j \in \Theta_j \ \text{and} \ \forall j \in  N \setminus \{i\} $, where $\lambda \in [0,1]$. Hence $\beta_{ij}(\theta_j) \in [l_j^{k-1}(\theta_j),u_j^{k-1}(\theta_j)]=P_{j}^{k-1}(\theta_{j})  \ \forall \theta_j \in \Theta_j \ \text{and} \ \forall j \in  N \setminus \{i\}$.
\newline
\newline
 Assumption \ref{continuity} implies that there exists some value of $\lambda \in [0,1]$ for the choice belief $\beta_i= (1-\lambda) \cdot \beta_i'' + \lambda \cdot \beta_i'$ and parameter belief $f_i= (1-\lambda) \cdot f_i'' + \lambda \cdot f_i'$  such that $c_i$ is optimal for $(\theta_i, \beta_i, f_i)$ , which implies that $c_i \in P_i^k(\theta_i)$. This implies that all choices in between $l_i^k(\theta_i)$ and $u_i^k(\theta_i)$ are also in $P_i^k(\theta_i)$. As a result, we have that $P_i^k(\theta_i)=[l_i^k(\theta_i),u_i^k(\theta_i)]$.  Hence, the statement also holds for round $k$.
\newline
\newline
Next, we will prove that if $\theta_i < \theta_i'$, then $l_i^k(\theta_i) \leq l_i^k(\theta_i')$ and $u_i^k(\theta_i) \leq u_i^k(\theta_i')$ for every $k \in \mathbb{N}$. We will prove the first statement by induction on $k$. Note that the base case $k=0$ is true. Now assume that our statement holds for the first $k-1$ rounds of the iterative procedure.
\newline
\newline
We will first consider the lower bound. By contradiction, let $l_i^k(\theta_i) > l_i^k(\theta_i')$ after round $k$. This means that the choice $l_i^k(\theta_i')$ will be eliminated for player $i$ with parameter $\theta_i$ after round $k$. Hence, for parameter $\theta_i$, for any choice belief $\beta_i$, where $\beta_{ij}(\theta_{j}) \in P_{j}^{k-1}(\theta_{j}) \ \forall \theta_j \in \Theta_j \ \text{and} \ \forall j \in  N \setminus \{i\}$ and any parameter belief $f_i \in M$, there will be some choice in $P_i^{k-1}(\theta_i)$ that yields a greater utility than choosing $l_i^k(\theta_i')$. We denote this by 
\begin{equation}\label{star}
   U_i(\theta_i,l_i^k(\theta_i'),\beta_i, f_i) <  U_i(\theta_i,c_i,\beta_i,f_i) \ \text{for some} \ c_i \in P_i^{k-1}(\theta_i) . 
\end{equation}

Furthermore, we have that the choice $l_i^k(\theta_i')$ is optimal for player $i$ with parameter $\theta_i'$. Hence, there exists some choice belief $\beta_i'$, where $\beta_{ij}'(\theta_{j}) \in P_{j}^{k-1}(\theta_{j}) \ \forall \theta_j \in \Theta_j \ \text{and}  \ \forall j \in  N \setminus \{i\}$ and parameter belief $f_i' \in M$, such that  

\begin{equation}\label{star2}
   U_i(\theta_i',l_i^k(\theta_i'),\beta_i',f_i') \geq U_i(\theta_i',c_i,\beta_i',f_i') \ \forall \ c_i \in P_i^{k-1}(\theta_i'). 
\end{equation}

For this choice belief $\beta_i'$ and parameter belief $f_i'$, consider a choice $c_i'$ such that $U_i(\theta_i,c_i',\beta_i',f_i') \geq U_i(\theta_i,c_i,\beta_i',f_i') \ \forall c_i \in P_i^{k-1}(\theta_i)$. Furthermore, (\ref{star}) implies that $U_i(\theta_i,l_i^k(\theta_i'),\beta_i',f_i') <  U_i(\theta_i,c_i',\beta_i',f_i')$. It must be that $c_i' \in P_i^{k}(\theta_i)$, which implies that $c_i' \geq l_i^k(\theta_i)$. Furthermore, we have that $c_i' \in P_i^{k}(\theta_i) \subseteq P_i^{k-1}(\theta_i)$, which implies that $c_i' \leq u_i^{k-1}(\theta_i)$. As a result, we have that

 \begin{equation}\label{1}
 U_i(\theta_i,l_i^k(\theta_i'),\beta_i',f_i') <  U_i(\theta_i,c_i',\beta_i',f_i') \ \text{for some} \ \ c_i' \in [l_i^k(\theta_i),u_i^{k-1}(\theta_i)] \subseteq P_i^{k-1}(\theta_i').
 \end{equation}
 We have that $ [l_i^k(\theta_i),u_i^{k-1}(\theta_i)] \subseteq P_i^{k-1}(\theta_i')$ as $l_i^k(\theta_i) > l_i^k(\theta_i') \geq l_i^{k-1}(\theta_i')$ and $u_i^k(\theta_i) \leq u_i^{k-1}(\theta_i) \leq u_i^{k-1}(\theta_i')$. Together with (\ref{star2}) this implies that
  \begin{equation}\label{2}
U_i(\theta_i',l_i^k(\theta_i'),\beta_i',f_i') \geq U_i(\theta_i',c_i',\beta_i',f_i') 
 \end{equation}
 
We can rewrite (\ref{1}) and (\ref{2}) to
 
  \begin{equation}\label{3}
 U_i(\theta_i,c_i',\beta_i',f_i') - U_i(\theta_i,l_i^k(\theta_i'),\beta_i',f_i') > 0 
 \end{equation}
 and 
  \begin{equation}\label{4}
U_i(\theta_i',c_i',\beta_i',f_i') - U_i(\theta_i',l_i^k(\theta_i'),\beta_i',f_i')   \leq 0. 
 \end{equation}
 Finally, we have that (\ref{3}) and (\ref{4}) imply
  \begin{equation}\label{5}
 U_i(\theta_i',c_i',\beta_i',f_i') - U_i(\theta_i',l_i^k(\theta_i'),\beta_i',f_i') < U_i(\theta_i,c_i',\beta_i',f_i') - U_i(\theta_i,l_i^k(\theta_i'),\beta_i',f_i')  
 \end{equation}
 
 As $c_i' > l_i^k(\theta_i')$ and $\theta_i' > \theta_i $, this is a contradiction to Assumption 1. Hence, it must be that $l_i^k(\theta_i) \leq l_i^k(\theta_i')$ after round $k$. In a similar way, it can be shown that $u_i^k(\theta_i') \geq u_i^k(\theta_i)$.
 \newline
 \newline
 Next, we will show that $P_i(\theta_i)=[ l_i(\theta_i),u_i(\theta_i)]$ and $l_i(\theta_i) \leq l_i(\theta_i') \ \text{and} \ u_i(\theta_i) \leq u_i(\theta_i')$.
 \newline
 \newline
 As $P_i^k(\theta_i)=[l_i^k(\theta_i),u_i^k(\theta_i)]$ for all $i \in N$, we have  $P_i(\theta_i)= \underset{k \geq 1}{\bigcap} P_i^k(\theta_i)= \underset{k \geq 1}{\bigcap} [l_i^k(\theta_i),u_i^k(\theta_i)]$. Similarly we have $P_i(\theta_i')= \underset{k \geq 1}{\bigcap} P_i^k(\theta_i')= \underset{k \geq 1}{\bigcap} [l_i^k(\theta_i'),u_i^k(\theta_i')]$. Furthermore, we have  $l_i^k(\theta_i) \leq l_i^k(\theta_i')$ and $u_i^k(\theta_i) \leq u_i^k(\theta_i') \ \forall k \geq 1$. Hence, $P_i(\theta_i)=[ l_i(\theta_i),u_i(\theta_i)] \ \text{and} \ P_i(\theta_i')=[ l_i(\theta_i'),u_i(\theta_i')]$, where $l_i(\theta_i) \leq l_i(\theta_i') \ \text{and} \ u_i(\theta_i) \leq u_i(\theta_i')$.
 \newline
 \newline
 Consider player $i$ with parameter $\theta_i$, choice belief $\beta_i'$ and parameter belief $f_i'$ as described in Theorem 1. We will now show that $(\beta_i'\circ f_i') \geq (\beta_i \circ f_i) \ \forall \beta_i$ such that $\beta_{ij}(\theta_j) \in P_j^{k-1}(\theta_j) \ \forall \theta_j \in \Theta_j \ \text{and} \ j \neq i $ and $\forall f_i \in M$. Hence, we will show that 

    $$\int_{c_j'}^{\Bar{c}_{j}} (\beta_{ij}' \circ f_{ij}')(c_{j}) \, d c_{j} \geq \int_{c_{j}'}^{\Bar{c}_{j}} (\beta_{ij} \circ f_{ij})(c_j)  \, d c_{j}  \ \forall c_{j}'  \in C_{j} \ \text{and}\ j\neq i$$

    which can be rewritten to

    $$\int_{\theta_j:\beta_{ij}'(\theta_j)\geq c_{j}'}  f_{ij}'(\theta_j) \, d \theta_j \geq \int_{\theta_j:\beta_{ij}(\theta_j)\geq c_{j}'}  f_{ij}(\theta_j) \, d \theta_j \ \forall c_{j}'  \in C_{j} \ \text{and}\ j\neq i.$$

 We have that $\beta_{ij}'(\theta_j)=  \max P_j^{k-1}(\theta_j) \ \forall \theta_j \in \Theta_j$. As $ \max P_j^{k-1}(\theta_j)$ is weakly increasing in $\theta_j$, we have that $\beta_{ij}'(\theta_j)$ is weakly increasing in $\theta_j$. Now consider the parameter value $\theta_j'$ such that $\theta_j'= \min \{ \theta_j \in \Theta_j | \beta_{ij}'(\theta_j)= c_j'\}$. Hence, we have that 

    $$\int_{\theta_j:\beta_{ij}'(\theta_j)\geq c_{j}'}  f_{ij}'(\theta_j) \, d \theta_j = \int_{\theta_j'}^{\Bar{\theta_j}}  f_{ij}'(\theta_j) \, d \theta_j$$

    Consider any $\theta_j'' < \theta_j'$. Then $\beta_{ij}(\theta_j'')\leq \beta_{ij}'(\theta_j'') < c_j'$. As a result, 

$$\int_{\theta_j:\beta_{ij}(\theta_j)\geq c_{j}'}  f_{ij}(\theta_j) \, d \theta_j \leq \int_{\theta_j'}^{\Bar{\theta_j}}  f_{ij}(\theta_j) \, d \theta_j \leq \int_{\theta_j'}^{\Bar{\theta_j}}  f_{ij}'(\theta_j) \, d \theta_j = \int_{\theta_j:\beta_{ij}'(\theta_j)\geq c_{j}'}  f_{ij}'(\theta_j) \, d \theta_j .$$

The second inequality holds as $f_{ij}'$ first-order stochasically dominates $f_{ij}$. As this is true for any opponent $j$, this implies that $(\beta_i'\circ f_i') \geq (\beta_i \circ f_i)$. Now let $u_i^k(\theta_i)$ be optimal for $(\theta_i,\beta_i',f_i')$. Note that for any other $\beta_i$ such that $\beta_{ij}(\theta_j) \in P_j^{k-1}(\theta_j) \ \forall \theta_j \in \Theta_j \ \text{and} \ j \neq i $ and $\forall f_i \in M$ we have that $(\theta_i,\beta_i',f_i') \geq (\theta_i,\beta_i,f_i)$. Then, by Lemma \ref{monotonicity}, for the optimal choice $c_i$ that corresponds to $(\theta_i,\beta_i,f_i)$ we have that $c_i \leq u_i^k(\theta_i)$. The proof for the lower bound is similar.   
    
 \end{proof}

 \newpage

\bibliographystyle{apacite}
\bibliography{references}

\begin{thebibliography}{}

\bibitem [\protect \citeauthoryear {%
Athey%
}{%
Athey%
}{%
{\protect \APACyear {2001}}%
}]{%
athey2001single}
\APACinsertmetastar {%
athey2001single}%
\begin{APACrefauthors}%
Athey, S.%
\end{APACrefauthors}%
\unskip\
\newblock
\APACrefYearMonthDay{2001}{}{}.
\newblock
{\BBOQ}\APACrefatitle {Single crossing properties and the existence of pure
  strategy equilibria in games of incomplete information} {Single crossing
  properties and the existence of pure strategy equilibria in games of
  incomplete information}.{\BBCQ}
\newblock
\APACjournalVolNumPages{Econometrica}{69}{4}{861--889}.
\PrintBackRefs{\CurrentBib}

\bibitem [\protect \citeauthoryear {%
Athey%
}{%
Athey%
}{%
{\protect \APACyear {2002}}%
}]{%
athey2002monotone}
\APACinsertmetastar {%
athey2002monotone}%
\begin{APACrefauthors}%
Athey, S.%
\end{APACrefauthors}%
\unskip\
\newblock
\APACrefYearMonthDay{2002}{}{}.
\newblock
{\BBOQ}\APACrefatitle {Monotone comparative statics under uncertainty}
  {Monotone comparative statics under uncertainty}.{\BBCQ}
\newblock
\APACjournalVolNumPages{The Quarterly Journal of Economics}{117}{1}{187--223}.
\PrintBackRefs{\CurrentBib}

\bibitem [\protect \citeauthoryear {%
Bach%
\ \BBA {} Perea%
}{%
Bach%
\ \BBA {} Perea%
}{%
{\protect \APACyear {2021}}%
}]{%
bach2021incomplete}
\APACinsertmetastar {%
bach2021incomplete}%
\begin{APACrefauthors}%
Bach, C\BPBI W.%
\BCBT {}\ \BBA {} Perea, A.%
\end{APACrefauthors}%
\unskip\
\newblock
\APACrefYearMonthDay{2021}{}{}.
\newblock
{\BBOQ}\APACrefatitle {Incomplete information and iterated strict dominance}
  {Incomplete information and iterated strict dominance}.{\BBCQ}
\newblock
\APACjournalVolNumPages{Oxford Economic Papers}{73}{2}{820--836}.
\PrintBackRefs{\CurrentBib}

\bibitem [\protect \citeauthoryear {%
Battigalli%
}{%
Battigalli%
}{%
{\protect \APACyear {2003}}%
}]{%
battigalli2003rationalizability}
\APACinsertmetastar {%
battigalli2003rationalizability}%
\begin{APACrefauthors}%
Battigalli, P.%
\end{APACrefauthors}%
\unskip\
\newblock
\APACrefYearMonthDay{2003}{}{}.
\newblock
{\BBOQ}\APACrefatitle {Rationalizability in infinite, dynamic games with
  incomplete information} {Rationalizability in infinite, dynamic games with
  incomplete information}.{\BBCQ}
\newblock
\APACjournalVolNumPages{Research in Economics}{57}{1}{1--38}.
\PrintBackRefs{\CurrentBib}

\bibitem [\protect \citeauthoryear {%
Battigalli%
\ \BBA {} Siniscalchi%
}{%
Battigalli%
\ \BBA {} Siniscalchi%
}{%
{\protect \APACyear {2003}}%
}]{%
battigalli2003rationalization}
\APACinsertmetastar {%
battigalli2003rationalization}%
\begin{APACrefauthors}%
Battigalli, P.%
\BCBT {}\ \BBA {} Siniscalchi, M.%
\end{APACrefauthors}%
\unskip\
\newblock
\APACrefYearMonthDay{2003}{}{}.
\newblock
{\BBOQ}\APACrefatitle {Rationalization and incomplete information}
  {Rationalization and incomplete information}.{\BBCQ}
\newblock
\APACjournalVolNumPages{Advances in Theoretical Economics}{3}{1}{}.
\PrintBackRefs{\CurrentBib}

\bibitem [\protect \citeauthoryear {%
Bernheim%
}{%
Bernheim%
}{%
{\protect \APACyear {1984}}%
}]{%
bernheim1984rationalizable}
\APACinsertmetastar {%
bernheim1984rationalizable}%
\begin{APACrefauthors}%
Bernheim, B\BPBI D.%
\end{APACrefauthors}%
\unskip\
\newblock
\APACrefYearMonthDay{1984}{}{}.
\newblock
{\BBOQ}\APACrefatitle {Rationalizable strategic behavior} {Rationalizable
  strategic behavior}.{\BBCQ}
\newblock
\APACjournalVolNumPages{Econometrica: Journal of the Econometric
  Society}{}{}{1007--1028}.
\PrintBackRefs{\CurrentBib}

\bibitem [\protect \citeauthoryear {%
Bulow%
, Geanakoplos%
\BCBL {}\ \BBA {} Klemperer%
}{%
Bulow%
\ \protect \BOthers {.}}{%
{\protect \APACyear {1985}}%
}]{%
bulow1985multimarket}
\APACinsertmetastar {%
bulow1985multimarket}%
\begin{APACrefauthors}%
Bulow, J\BPBI I.%
, Geanakoplos, J\BPBI D.%
\BCBL {}\ \BBA {} Klemperer, P\BPBI D.%
\end{APACrefauthors}%
\unskip\
\newblock
\APACrefYearMonthDay{1985}{}{}.
\newblock
{\BBOQ}\APACrefatitle {Multimarket oligopoly: Strategic substitutes and
  complements} {Multimarket oligopoly: Strategic substitutes and
  complements}.{\BBCQ}
\newblock
\APACjournalVolNumPages{Journal of Political economy}{93}{3}{488--511}.
\PrintBackRefs{\CurrentBib}

\bibitem [\protect \citeauthoryear {%
Dekel%
, Fudenberg%
\BCBL {}\ \BBA {} Morris%
}{%
Dekel%
\ \protect \BOthers {.}}{%
{\protect \APACyear {2007}}%
}]{%
dekel2007interim}
\APACinsertmetastar {%
dekel2007interim}%
\begin{APACrefauthors}%
Dekel, E.%
, Fudenberg, D.%
\BCBL {}\ \BBA {} Morris, S.%
\end{APACrefauthors}%
\unskip\
\newblock
\APACrefYearMonthDay{2007}{}{}.
\newblock
{\BBOQ}\APACrefatitle {Interim correlated rationalizability} {Interim
  correlated rationalizability}.{\BBCQ}
\newblock
\APACjournalVolNumPages{Theoretical Economics}{2}{1}{15--40}.
\PrintBackRefs{\CurrentBib}

\bibitem [\protect \citeauthoryear {%
Ely%
\ \BBA {} P{\k{e}}ski%
}{%
Ely%
\ \BBA {} P{\k{e}}ski%
}{%
{\protect \APACyear {2004}}%
}]{%
ely2004hierarchies}
\APACinsertmetastar {%
ely2004hierarchies}%
\begin{APACrefauthors}%
Ely, J\BPBI C.%
\BCBT {}\ \BBA {} P{\k{e}}ski, M.%
\end{APACrefauthors}%
\unskip\
\newblock
\APACrefYearMonthDay{2004}{}{}.
\newblock
{\BBOQ}\APACrefatitle {Hierarchies of belief and interim rationalizability}
  {Hierarchies of belief and interim rationalizability}.{\BBCQ}
\newblock
\APACjournalVolNumPages{CMS-EMS Northwestern University Working
  Paper}{}{1388}{}.
\PrintBackRefs{\CurrentBib}

\bibitem [\protect \citeauthoryear {%
Hadar%
\ \BBA {} Russell%
}{%
Hadar%
\ \BBA {} Russell%
}{%
{\protect \APACyear {1969}}%
}]{%
hadar1969rules}
\APACinsertmetastar {%
hadar1969rules}%
\begin{APACrefauthors}%
Hadar, J.%
\BCBT {}\ \BBA {} Russell, W\BPBI R.%
\end{APACrefauthors}%
\unskip\
\newblock
\APACrefYearMonthDay{1969}{}{}.
\newblock
{\BBOQ}\APACrefatitle {Rules for ordering uncertain prospects} {Rules for
  ordering uncertain prospects}.{\BBCQ}
\newblock
\APACjournalVolNumPages{The American economic review}{59}{1}{25--34}.
\PrintBackRefs{\CurrentBib}

\bibitem [\protect \citeauthoryear {%
Mas-Colell%
, Whinston%
\BCBL {}\ \BBA {} Green%
}{%
Mas-Colell%
\ \protect \BOthers {.}}{%
{\protect \APACyear {1995}}%
}]{%
mas1995microeconomic}
\APACinsertmetastar {%
mas1995microeconomic}%
\begin{APACrefauthors}%
Mas-Colell, A.%
, Whinston, M\BPBI D.%
\BCBL {}\ \BBA {} Green, J\BPBI R.%
\end{APACrefauthors}%
\unskip\
\newblock
\APACrefYear{1995}.
\newblock
\APACrefbtitle {Microeconomic theory} {Microeconomic theory}\ (\BVOL~1).
\newblock
\APACaddressPublisher{}{Oxford university press New York}.
\PrintBackRefs{\CurrentBib}

\bibitem [\protect \citeauthoryear {%
McAdams%
}{%
McAdams%
}{%
{\protect \APACyear {2003}}%
}]{%
mcadams2003isotone}
\APACinsertmetastar {%
mcadams2003isotone}%
\begin{APACrefauthors}%
McAdams, D.%
\end{APACrefauthors}%
\unskip\
\newblock
\APACrefYearMonthDay{2003}{}{}.
\newblock
{\BBOQ}\APACrefatitle {Isotone equilibrium in games of incomplete information}
  {Isotone equilibrium in games of incomplete information}.{\BBCQ}
\newblock
\APACjournalVolNumPages{Econometrica}{71}{4}{1191--1214}.
\PrintBackRefs{\CurrentBib}

\bibitem [\protect \citeauthoryear {%
Milgrom%
\ \BBA {} Roberts%
}{%
Milgrom%
\ \BBA {} Roberts%
}{%
{\protect \APACyear {1990}}%
}]{%
milgrom1990rationalizability}
\APACinsertmetastar {%
milgrom1990rationalizability}%
\begin{APACrefauthors}%
Milgrom, P.%
\BCBT {}\ \BBA {} Roberts, J.%
\end{APACrefauthors}%
\unskip\
\newblock
\APACrefYearMonthDay{1990}{}{}.
\newblock
{\BBOQ}\APACrefatitle {Rationalizability, learning, and equilibrium in games
  with strategic complementarities} {Rationalizability, learning, and
  equilibrium in games with strategic complementarities}.{\BBCQ}
\newblock
\APACjournalVolNumPages{Econometrica: Journal of the Econometric
  Society}{}{}{1255--1277}.
\PrintBackRefs{\CurrentBib}

\bibitem [\protect \citeauthoryear {%
Reny%
}{%
Reny%
}{%
{\protect \APACyear {2011}}%
}]{%
reny2011existence}
\APACinsertmetastar {%
reny2011existence}%
\begin{APACrefauthors}%
Reny, P\BPBI J.%
\end{APACrefauthors}%
\unskip\
\newblock
\APACrefYearMonthDay{2011}{}{}.
\newblock
{\BBOQ}\APACrefatitle {On the Existence of Monotone Pure-Strategy Equilibria in
  Bayesian Games} {On the existence of monotone pure-strategy equilibria in
  bayesian games}.{\BBCQ}
\newblock
\APACjournalVolNumPages{Econometrica}{79}{2}{499--553}.
\PrintBackRefs{\CurrentBib}

\bibitem [\protect \citeauthoryear {%
Topkis%
}{%
Topkis%
}{%
{\protect \APACyear {1979}}%
}]{%
topkis1979equilibrium}
\APACinsertmetastar {%
topkis1979equilibrium}%
\begin{APACrefauthors}%
Topkis, D\BPBI M.%
\end{APACrefauthors}%
\unskip\
\newblock
\APACrefYearMonthDay{1979}{}{}.
\newblock
{\BBOQ}\APACrefatitle {Equilibrium points in nonzero-sum n-person submodular
  games} {Equilibrium points in nonzero-sum n-person submodular games}.{\BBCQ}
\newblock
\APACjournalVolNumPages{Siam Journal on control and
  optimization}{17}{6}{773--787}.
\PrintBackRefs{\CurrentBib}

\bibitem [\protect \citeauthoryear {%
Van~Zandt%
\ \BBA {} Vives%
}{%
Van~Zandt%
\ \BBA {} Vives%
}{%
{\protect \APACyear {2007}}%
}]{%
van2007monotone}
\APACinsertmetastar {%
van2007monotone}%
\begin{APACrefauthors}%
Van~Zandt, T.%
\BCBT {}\ \BBA {} Vives, X.%
\end{APACrefauthors}%
\unskip\
\newblock
\APACrefYearMonthDay{2007}{}{}.
\newblock
{\BBOQ}\APACrefatitle {Monotone equilibria in Bayesian games of strategic
  complementarities} {Monotone equilibria in bayesian games of strategic
  complementarities}.{\BBCQ}
\newblock
\APACjournalVolNumPages{Journal of Economic Theory}{134}{1}{339--360}.
\PrintBackRefs{\CurrentBib}

\bibitem [\protect \citeauthoryear {%
Vives%
}{%
Vives%
}{%
{\protect \APACyear {1990}}%
}]{%
vives1990nash}
\APACinsertmetastar {%
vives1990nash}%
\begin{APACrefauthors}%
Vives, X.%
\end{APACrefauthors}%
\unskip\
\newblock
\APACrefYearMonthDay{1990}{}{}.
\newblock
{\BBOQ}\APACrefatitle {Nash equilibrium with strategic complementarities} {Nash
  equilibrium with strategic complementarities}.{\BBCQ}
\newblock
\APACjournalVolNumPages{Journal of Mathematical Economics}{19}{3}{305--321}.
\PrintBackRefs{\CurrentBib}

\end{thebibliography}

\end{document}